\title{The role of semantics \\ in mining frequent patterns from knowledge bases \\ in description logics with rules}
\shorttitle{The role of semantics in mining frequent patterns }
\author[J. J\'ozefowska \and A. \L{}awrynowicz \and T. \L{}ukaszewski]
{
JOANNA J\'OZEFOWSKA, AGNIESZKA \L{}AWRYNOWICZ, TOMASZ \L{}UKASZEWSKI\\
\ \ \ \ \ \ \ Institute of Computing Science, Poznan University of Technology,\ \ \ \ \ \ \ \ \\ 
\ \ \ \ \ \ \ \ \ ul. Piotrowo 2, 60-965 Poznan, Poland\ \ \ \ \ \ \ \ \ \\
\ \ \ \ \ \ \ Email: \{jjozefowska, alawrynowicz, tlukaszewski\}@cs.put.poznan.pl\ \ \ \ \ \ \ 
}
\date{}
\begin{document}
\bibliographystyle{acmtrans}

\long\def\comment#1{}

\maketitle

\begin{abstract}
We propose a new method for mining frequent patterns in a language that combines both Semantic Web ontologies and rules. In particular we consider the setting of using a language that combines description logics with DL-safe rules. This setting is important for the practical application of data mining to the Semantic Web. We focus on the relation of the semantics of the representation formalism to the task of frequent pattern discovery, and for the core of our method, we propose an algorithm that exploits the semantics of the combined knowledge base.
We have developed a proof-of-concept data mining implementation of this. 
Using this we have empirically shown that using the combined knowledge base to perform semantic tests can make data mining faster by pruning useless candidate patterns before their evaluation. We have also shown that the quality of the set of patterns produced may be improved: the patterns are more compact, and there are fewer patterns. We conclude that exploiting the semantics of a chosen representation formalism is key to the design and application of (onto-)relational frequent pattern discovery methods.  
\\
\\
\textbf{Note: To appear in Theory and Practice of Logic Programming (TPLP)}.
\end{abstract}

\begin{keywords}
frequent pattern discovery, ontologies, Semantic Web, DL-safe rules
\end{keywords}

\section{Introduction}
The discovery of frequent patterns is a fundamental data mining task. It has been studied for many different forms of input data and the pattern.  Within the  relational setting it has been investigated 
since the development of \textsc{WARMR} \cite{WARMR_99}. 
\textsc{WARMR} uses the Datalog subset of \emph{first}-\emph{order logic (FOL)} as the representation language for both data and patterns. As such, \textsc{WARMR}, and other subsequently proposed relational frequent pattern miners, \textsc{FARMER} \cite{FARMER:2001,FARMER:2003} and c-armr \cite{carmr}, can be classified as \emph{Inductive Logic Programming (ILP)} \cite{NC_Wolf:1997,RDM_book} methods. These ILP systems have been successfully applied to a number of domains, most notably bioinformatics \cite{KingYeast2000,KingKCD00,KingBioinformatics2001} and chemoinformatics \cite{Dehaspe98findingfrequent}. 
\\
\indent
While relational frequent pattern mining methods have mostly assumed Datalog as the representation language, currently most activity within the field of \emph{knowledge representation (KR)} assumes the use of logic-based ontology languages such as \emph{description logics (DLs)} \cite{DL_handbook}. 
Thanks to its significant support for modelling ontologies, and suitability to the inherently open and incomplete nature of the Web environment, description logic has been chosen as the formal foundation of the standard ontology language for the Web, the \emph{Web Ontology Language (OWL)} \cite{OWL}. OWL is now considered one of the fundamental technologies underpinning  the \emph{Semantic Web} \cite{tbl:2001}, currently one of the most active application fields of artificial intelligence. 
\\
\indent
Research in KR is focused on developing deductive reasoning procedures, which are also traditionally employed to reason with logic-based ontological data. However, to meet the challenges posed by the Semantic Web scale and use cases, such deductive approaches are not enough. Therefore, there is a recent trend in Semantic Web research to propose complementary forms of reasoning that are more efficient and noise-tolerant. A promising approach in this area is to use inductive methods to complement deductive ones.  This is in line with the recent trends in ILP research to broaden the scope of the 
logical formalisms considered to description logics, or hybrid languages combining description logics with logic programs. Since description logic knowledge bases are often equated with ontologies, ILP methods applied to such knowledge bases have been referred to as ''ontology mining'' methods \cite{fanizzi06declarative,1483172,DBLP:conf/ekaw/dAmatoSF08}, and the ones applied to the hybrid knowledge bases to as ''onto-relational mining'' methods \cite{DBLP:conf/ilp/LisiE08}. 
To the best of our knowledge, only one onto-relational frequent pattern mining method, \textsc{SPADA} \cite{Lisi:2004}, has been proposed. 
\\
\indent
This paper describes a method for frequent pattern mining in knowledge bases represented in the formalism of DL-safe rules \cite{dl_safe_rules2} that combine Semantic Web ontologies (represented in description logic) and rules (represented in disjunctive Datalog). 
This language meets the requirements of knowledge representation for the Semantic Web and target application domains, and possesses properties suitable for data mining applications. 
In the core of the method, we propose an algorithm that exploits at various steps the semantics of a combined knowledge base on which it operates. We show how to realize the proposed method in terms of exploiting state-of-the-art reasoning techniques, and present a proof-of-concept implementation of the method. 
For Semantic Web research, the paper contributes to the general understanding of the role of ontologies and semantics in helping to solve knowledge-intensive tasks by exploiting the meaning of the represented knowledge. 
For ILP data mining research, the method's main novel feature is its exploitation of the semantics of the chosen language.
\\
\indent
The rest of the paper is organized as follows. Section \ref{sec:motivation} disusses a technical and an application-oriented motivation of the work. In Section \ref{sec:prelim} we introduce the basics of knowledge representation formalisms considered in this paper, and the problem of frequent pattern mining from combined knowledge bases. In Section \ref{sec:algorithm} we present our method for mining frequent patterns. In Section \ref{sec:experiments} we present the experimental evaluation of the proposed approach. Section \ref{sec:related} contains the discussion of the related work. Finally, Section \ref{sec:conc} concludes the paper, and outlines future work.

\section{Motivation}\label{sec:motivation}
\subsection{The Setting}
The problem of combining ontologies with rules is central in the Semantic Web. In the current stack of the Semantic Web languages, rules are placed in the same layer as ontologies. There is an ongoing initiative to define an open format for rule interchange on the Semantic Web, the \emph{Rule Interchange Format (RIF)}\footnote{http://www.w3.org/2005/rules/wiki/RIF\_Working\_Group}, that will cover a wide spectrum of rule types, among them deductive rules represented in Datalog. As we will discuss further in the paper, some important application domains such as life sciences require a language that combines description logic with some form of Datalog rules. 
\\
\indent
Since a straightforward combination of DL and rules may easily lead to the undecidability of reasoning problems, the problem of developing such combinations has received a lot of attention in KR and Semantic Web research.   
This has resulted in several proposals which may be generally divided into the following approaches: interaction of rules and ontologies with strict semantic separation (loose coupling), interaction of rules and ontologies with strict semantic integration (tight coupling), and reductions from DLs to logic programming  formalisms. 
\\
\indent
In the first approach, adopted by \emph{dl}-\emph{programs} \cite{dl-programs1,dl-programs2,DBLP:journals/ai/EiterILST08}, DL and rule components are technically separate, and can be seen as black boxes communicating via ''safe interface''.  
\\
\indent
In the second type of approach, ''safe interaction'', rules and DL knowledge bases are combined in a common semantic framework. A straightforward, tight extension of DL with first-order implication as proposed for \emph{Semantic Web Rule Language (SWRL)} in \cite{SWRL}, is trivially undecidable. On the other hand, \emph{Description Logic Programs (DLP)} \cite{DLP_Grosof} describe a decidable intersection of description logic and logic programs. In between of these two opposite approaches, there is a group of proposals such as $\mathcal{AL}$-\emph{log} \cite{AL_log}, \emph{CARIN} \cite{CARIN}, \emph{DL}-\emph{safe rules} \cite{dl_safe_rules2} or $\mathcal{DL}$+\emph{log} \cite{dl_plus_log} where to obtain decidability, either DL, or rules or both are typically constrained by various syntactic restrictions, e.g. in the form of a safety condition.
However, the usual syntactic restrictions may also be dropped, through changing the usual perspective of the integration from DLs to the perspective of rule-based systems, as proposed in \cite{DBLP:conf/esws/Lukasiewicz07} for the case of a tightly integrated form of disjunctive dl-programs. 
Finally, the tight integration may also become a full one, as in \emph{hybrid MKNF knowledge bases} \cite{motik_faithful}, where there is no separation between vocabularies of a DL and rule component.  
\\
\indent
An interesting representative approach for the works consisting on reducing description logics to logic programming is proposed in \cite{KAON2_REDUCTION:2004,KAON2_REDUCTION:2007} for an expressive DL language $\mathcal{SHIQ}$. 
In that approach, the consistency checking and query answering is reduced to the evaluation of a positive disjunctive Datalog program, which is obtained by a translation of a description logic knowledge base to first-order logic, followed by an application of superposition techniques, and the elimination of function symbols from the resulting set of clauses. 
\\
\indent
Despite of the desired expressivity, there are also other important requirements for a language to be used in frequent pattern mining applications, which are data-intensive in nature. In the last decade, the focus of KR research has been mostly on developing reasoning techniques for handling complex DL intensional knowledge, and on decidability issues. However, new Semantic Web applications require efficient scalable procedures for query answering over ontologies, which now becomes an intensively explored area of research. Scalability may be achieved by restricting features of a DL language to obtain a lightweight one, but tailored for data-intensive applications, as in the case of a tractable family of   lanuages called \emph{DL}-\emph{Lite} \cite{1286097}. An interesting recent study into this direction is presented in \cite{DBLP:conf/pods/CaliGL09}, where a family of expressive extensions of Datalog is proposed that generalize the DL-Lite family, e.g. by admitting  existentially quantified variables in rule heads. 
The requirement for efficient query answering over large amounts of data (extensional knowledge) is crucial for frequent pattern mining applications.
\\
\indent
Taking into account both criteria, that is sufficiently interesting expressivity required for real applications, and efficient query answering procedures, one combination of DLs and rules with interesting properties is the formalism of DL-safe rules \cite{dl_safe_rules2}. In this formalism decidability is obtained by restricting the rules to DL-safe ones that are applicable only to instances explicitly known by name.   
As it was shown in \cite{KAON2_REDUCTION:2004,KAON2_REDUCTION:2007}, the restriction to DL-safety enables the transformation of a DL knowledge base to a disjunctive Datalog program. This in turn enables the application of well-known reasoning algorithms and optimization techniques (such as magic-sets or join-order optimizations) developed for deductive databases in order to handle large data quantities. Some of these methods have recently been extended for disjunctive Datalog \cite{magic_set_disj:2004}. The algorithm proposed in \cite{dl_safe_rules2} for query answering in DL with DL-safe rules separates reasoning on the intensional part of a knowledge base from that on the extensional part, which means that the inferences made on the intensional part are not repeated for different instances during query answering. This in turn enables better complexity results for the query answering algorithm than in case of the other state-of-the-art reasoning techniques developed for expressive DLs \cite{KAON2_COMPLEXITY:2005,motik_aboxes}. It should be noted that, if the translation does not generate any disjunctive rules, then the algorithm applies the least fixpoint operator used to evaluate non-disjunctive Datalog programs. Since the consequences of the least fixpoint operator can be computed in polynomial time, an important feature of the algorithm is that its behaviour becomes tractable while it is applied for less expressive languages.  
\\
\indent
The discussed features make the chosen DL-safe rules formalism suitable for the envisaged frequent pattern mining applications.  

\subsection{Possible applications}
The primary motivation for our work is for the application of our method 
to real-world data-mining applications.  
Arguably the most extensive use of Semantic Web KR methods is in 
the domain of biology.  Large amounts of data are increasingly 
becoming openly available and described using real-life ontologies, represented in Semantic Web languages, such as GO (Gene Ontology)\footnote{http://www.geneontology.org} or BioPax (biological pathway knowledge)\footnote{http://www.biopax.org}.  This opens up the possibility for interesting large-scale and real-world onto-relational data mining applications. 
\\
\indent
Below we will describe why KR in biology requires a language able to model the existence of unknown entities, disjunctions, and arbitrary composition of relations, that is a language that combines description logic with some form of Datalog rules.  These requirements are a domain specific motivation for our language selection.
\\
\indent
Information stored in biological knowledge bases is inherently incomplete.
For example, in functional genomics ''\emph{every protein has a function}'', but often this function is unknown. Similarly, it is known that certain genes exist because they encode known proteins, but the identity of these genes are unknown (so-called ''locally orphan'' genes). The existence of entities with unknown identity can be easily represented in description logic\footnote{$Protein \sqsubseteq \exists hasFunction$}, while it cannot be represented in Datalog.
\\
\indent
Another way of modelling incompletness is by use of disjunction, what is not expressible in Datalog. 
For example, disjunction may be used to describe that one instance of certain tertiary structure units must be present in a protein (''\emph{a classical tyrosine phosphatase has at least one low molecular weight phosphotyrosine \emph{or} one tyrosine specific with dual specificity p-domain}''\cite{DBLP:journals/ijmms/StevensAWSDHR07}).  
\\
\indent
In general, DLs employ the \emph{open-world assumption (OWA)} which seems suitable for a domain characterized by information that is incomplete either due to the limits in the current state of knowledge or due to omissions common in curation processes. OWA is closely related to the \emph{monotonic} form of reasoning, classically assumed in FOL. 
The monotonicity of reasoning in DLs is in line with the need for the knowledge held in scientific knowledge bases to only comprise information that is generally accepted and experimentally validated, and, which is reasonable to assume, will not be falsified.  
For example, one may state that ''\emph{In E.
coli K-12, the protein encoded by the gene ECK0647 when in inner membrane,
facilitates the transport of glutamate from the periplasm to the cytoplasm.}''\footnote{$inMembrane \sqcap inCytoplasm \sqsubseteq \bot, ECK0647\_Protein \sqsubseteq inCytoplasm, glutamateTransport \sqsubseteq \exists participant$.$ECK0647\_Protein \sqcap inMembrane$} \cite{biopax}. This statement does not provide the reference to a particular enzyme and any information whether the transport is active or passive. If we subsequently learn such information, this does not change any positive or negative conclusions. Since it is characteristic of the current state of biology that much information is only known to a certain degree, the monotonicity of reasoning in DLs 
allows scientific knowledge bases to be extensible and evolve with scientific knowledge. 
\\
\indent
However, while possessing features which are not available in Datalog, description logic also has limitations. It does not allow relations of arbitrary arity and arbitrary composition of relations. Assume for example, that we would like to express that ''\emph{whenever a metal ion is bound
to a phosphatase which catalyses a dephosphorylation of some protein, then
this ion regulates the dephosphorylation of this protein}''\cite{DBLP:journals/ijmms/StevensAWSDHR07}\footnote{$regulates(x,y)\leftarrow metal\_ion(x), isBound(x,z), phosphatase(z), catalyses(z,y), dephosphorylation(y)$}.
It requires modelling that a composition of relations implies another relation, which may be expressed in the form of a rule. 
\\
\indent
The above arguments show the need for languages combining description logic with some form of rules in the discussed field, and the DL-safe rules formalism meets all the necessary requirements for expressivity. It is also interesting to note here, that in \cite{biopax}, while discussing problems concerned with using description logic for modelling metabolic pathways, it has been already argued, there should be a way for modelling that certain axioms may be applied only if there are known instances of some class. The solution to check this could be by submitting a query with a DL-safety condition.
\\
\indent
An interesting sample application of frequent pattern mining in the field of biology may be in functional genomics with the goal of the identification of frequent patterns in the amino acid sequence descriptions. The results would be further used to generate rules for predicting protein functional class. Such an approach would constitute an onto-relational upgrade of the relational data mining application already proposed in the literature \cite{KingKCD00}.
Another, novel application may be in metabolic pathways analysis. The goal of the application would be to identify common frequent pathways in human and other organisms that cause human diseases. The results of such analysis would allow for targeted drug design. 
\\
\indent
Despite of life sciences, frequent pattern mining applications on the Semantic Web data may be valuable in many other domains. Let us take e-business as another example. As rules provide a powerful business logic representation many use cases\footnote{http://www.w3.org/TR/rif-ucr} provided for RIF are actually in this domain.  
Most value for e-business that combinations of description logic ontologies and rules may provide is in increasing interoperability. Description logic provide means for expressing common vocabularies and domain knowledge, while rules enable to explicitely express business policies.   
For example, annotation of product and service offerings with terms from common ontologies such as GoodRelations\footnote{http://www.heppnetz.de/projects/goodrelations} may enable customers and enterprises an automatic search for suitable suppliers across the Web. 
Further, employing rules may enable to automatically express business relations between offerings and customers, and to express business policies such as ''\emph{The discount for a customer who buys a product is 5 percent
if the customer is premium and the product is regular.}''\footnote{$discount(x, y, percent5)\leftarrow premium(x), regular(y), buys(x,y)$}.
\\
\indent
In e-business domain, an interesting sample application of frequent pattern mining may be in finding frequent customer buying behaviors to support personalisation, recommendation services and targeted marketing. 
\\
\indent 
It should be stressed that for e-business applications, relatively lightweight ontologies may be sufficient, but the need for combining them with rules is essential in this domain.

\section{Preliminaries}
\label{sec:prelim}
\subsection{Language of knowledge representation}

In this section we introduce the language of knowledge representation based on the formalism of DL-safe rules. Further in this paper we develop an algorithm for frequent pattern discovery in this language. DL-safe rules combine description logics with disjunctive Datalog, which we briefly recall below. 

\subsubsection{Description logics}

Description logics (DLs) \cite{DL_handbook} are a family of knowledge representation languages, specifically suited to represent terminological knowledge in a structured and formalized way. Two kinds of atomic symbols are distinguished in any description logic language: {\em atomic concepts} (denoted by $A$) and {\em atomic roles} (denoted by $R$ and $S$). Atomic symbols are {\em elementary descriptions} from which we inductively build {\em complex descriptions} (denoted by $C$ and $D$) using {\em concept constructors} and {\em role constructors}.  Description logics differ by the set of constructors they admit. 

DLs are equipped with a logic-based model-theoretic semantics. The semantics is defined by {\em interpretations} 
$\mathcal{I} = (\Delta^{\mathcal{I}}, \cdot^{\mathcal{I}})$, where the non-empty set $\Delta^{\mathcal{I}}$ is the domain of the interpretation and the interpretation function $\cdot^{\mathcal{I}}$ assigns a set 
$A^{\mathcal{I}} \subseteq \Delta ^{\mathcal{I}}$ to every atomic concept $A$ and a binary relation 
$R^{\mathcal{I}} \subseteq \Delta ^{\mathcal{I}} \times  \Delta ^{\mathcal{I}} $ to every atomic role $R$.
The interpretation function is extended to concept descriptions by an inductive definition. The syntax and semantics of $\mathcal{SHIF}$ DL is defined in Table \ref{tab:2.1a}.

\begin{table}[h]
\caption{Syntax and semantics of $\mathcal{SHIF}$.}\label{tab:2.1a}
\centering\footnotesize%
\begin{tabular}{l l l l}
\hline
Constructor & Syntax & Semantics \\
\hline
Concept constructors\\
\hline
	Universal concept	& $\top$ &	 $\Delta^\mathcal{I}$\\ 
	Bottom concept	& $\bot$ &	 $\emptyset$\\ 
	Negation	of arbitrary concepts & $(\neg C)$  &  $\Delta^\mathcal{I}$$\backslash$$C^\mathcal{I}$\\ 
	Intersection	& $(C \sqcap D)$  & $C^\mathcal{I} \cap D^\mathcal{I}$ \\
   Union	& $(C \sqcup D)$  & $C^\mathcal{I} \cup D^\mathcal{I}$ \\
	Value restriction	& $(\forall R$.$C)$  & \{$a\in\Delta^\mathcal{I} | \forall b: (a, b)\in R^\mathcal{I} \rightarrow b\in C^\mathcal{I}\}$\\ 
	Full existential quantification	& $(\exists R$.$C)$  &	\{$a\in\Delta^\mathcal{I} | \exists b: (a, b)\in R^\mathcal{I} \wedge b\in C^\mathcal{I}\} $ \\
   Functionality & $\leq$1$R$ &  $\Big\{a \in\Delta^\mathcal{I} \big| \   |\{b | (a, b)\in R^\mathcal{I}\}| \leq 1\Big\}$\\
\hline
Role constructors\\
\hline
Inverse role & $R^{-}$ & $\{(a, b) \in \Delta^\mathcal{I} \times \Delta^\mathcal{I} | (b, a) \in R^\mathcal{I}\}$\\
Transitive role & \textsf{Trans}$(R)$ & $R^\mathcal{I}$ is transitive\\ 
\hline
\end{tabular}
\end{table}

A description logic {\em knowledge base, KB}, is typically divided into an {\em intensional part} ({\em terminological one, TBox}), and an {\em extensional part} ({\em assertional one, ABox}). The TBox contains axioms dealing with how concepts and roles are related to each other ({\em terminological axioms}), while the ABox contains assertions about individuals ({\em assertional axioms}). A semantics is given to ABoxes by extending interpretations $\mathcal{I} = (\Delta^{\mathcal{I}}, \cdot^{\mathcal{I}})$ by an additional mapping of each individual name $a$ to an element $a^{\mathcal{I}} \in \Delta ^{\mathcal{I}}$. The interpretation $\mathcal{I}$ satisfies a set of axioms (a TBox $\mathcal T$ or/and an ABox $\mathcal A$) iff it satisfies each element of this set.

\subsubsection{Disjunctive Datalog}

Disjunctive Datalog \cite{Eiter_DD} is an extension of Datalog that allows disjunctions of literals in the rule heads.

\newtheorem{definition}{Definition}
\begin{definition}[Disjunctive Datalog rule]
\label{DDR}

A disjunctive Datalog rule is a clause of the form

$$H_1 \vee \ldots \vee H_k \leftarrow B_1, \ldots, B_n$$

where $H_i$ and $B_j$ are atoms, and $k \geq  1, n \geq 0$.
$\square$
\end{definition}

\begin{definition}[Disjunctive logic program]
\label{DLP}

A disjunctive logic program $P$ is a finite collection of disjunctive Datalog rules.
$\square$
\end{definition}

We consider only disjunctive Datalog programs without negative literals in the body, that is, {\em positive programs}. 

For the semantics, only Herbrand models are considered, and the semantics of $P$ is defined by the set of all minimal models $M$ of $P$, denoted by $\mathcal{MM}(P)$. 
A ground literal $L$ is called a {\em cautious answer} of $P$, written $P \models_c L$, if $L \in M$ for all $M \in  \mathcal{MM}(P)$. FOL entailment coincides with cautious entailment for positive ground atoms on positive programs.

\subsubsection{DL-safe rules}

We use the formalism of DL-safe rules introduced in \cite{dl_safe_rules2}. 
The description logic $\mathcal{SHIF}$ and disjunctive Datalog rules are integrated by allowing concepts and roles to occur in rules as unary and binary predicates, respectively. 
Below we define DL-safe rules with respect to the description logic $\mathcal{SHIF}$ and disjunctive Datalog rules.

\begin{definition}[DL-safe rules]
\label{DLSR}

Let $KB$ be a knowledge base represened in the $\mathcal{SHIF}$ language. A {\em DL-predicate} is an atomic concept or a simple role from $KB$. For $t_1$ and $t_2$ being constants or variables,  a {\em DL-atom} is an atom of the form $A(t_1)$, where $A$ is an atomic concept in $KB$, or of the form $R(t_1, t_2)$, where $R$ is a simple role in $KB$, or of the form $t_1 = t_2$. A {\em non-DL-predicate} is any other predicate than =, an atomic concept in $KB$, or a role in $KB$. A {\em non-DL-atom} is an atom with any predicate other than =, an atomic concept in $KB$, and a role in $KB$. A (disjunctive) DL-rule is a (disjunctive) rule with DL- and non-DL-atoms in the head and in the body. A (disjunctive) DL-program $P$ is a finite set of (disjunctive) DL-rules. A {\em combined knowledge base} is a pair (KB, P). A (disjunctive) DL-rule $r$ is called {\em DL-safe} if each variable in $r$ occurs in a non-DL-atom in the rule body.  A (disjunctive) DL-program $P$ is DL-safe if all its rules are DL-safe.
$\square$
\end{definition}

In order to define the semantics of a combined knowledge base $(KB, P)$, the $KB$ axioms are mapped into a (disjunctive) Datalog program $DD(KB)$, which entails exactly the same set of ground facts as $KB$. 
The details concerning the mapping can be found in \cite{motik_phd,motik_aboxes}. It is proved that $KB$ is satisfiable with respect to the standard  model-theoretic semantics of $\mathcal{SHIF}$ iff $DD(KB)$ is satisfiable in first-order logic \cite{motik_phd}. It is also proved in \cite{motik_phd} that for a combined knowledge base $(KB, P)$ consisting of a $\mathcal{SHIF}$ knowledge base $KB$ and a finite set of DL-safe rules $P$, $(KB,P) \models  \alpha$  iff $DD(KB) \cup  P \models  \alpha $, for a ground atom $\alpha $, where $\alpha $ is of the form $A(a)$ or $R(a, b)$, and $A$ is an atomic concept. Therefore, reasoning in $(KB, P)$ can be performed using the well-known techniques from the field of deductive databases.

DL-safety implies that each variable is bound only to constants explicitely introduced in a $(KB, P)$. Let us consider, for example, a combined knowledge base $(KB, P)$ such that $KB$ contains the concept {\em Person} and roles {\em livesAt} and {\em worksAt}, while $P$ contains the following rule defining {\em Homeworker} as a person who lives and works at the same place:  

\begin{equation}
\label{regula}
Homeworker(x) \leftarrow Person(x), livesAt(x,y), worksAt(x,y)
\end{equation}

This rule is not DL-safe. It is because the variables $x$ and $y$ that occur in the DL-atoms $Person(x)$, $livesAt(x,y)$, $worksAt(x,y)$ do not occur in the body in any non-DL-atom. Let us introduce a special non-DL-predicate $\mathcal{O}$ such that the fact $\mathcal{O}(a)$ occurs for each individual $a$ in the ABox. In order to make rule (\ref{regula}) DL-safe, we add non-DL atoms $\mathcal{O}(x)$ and $\mathcal{O}(y)$ in the rule body, obtaining:

\begin{equation}
\label{DLSregula} 
Homeworker(x) \leftarrow Person(x), livesAt(x,y), worksAt(x,y), \mathcal{O}(x), \mathcal{O}(y)
\end{equation}

In order to express a DL-safe rule intuitively, we just append to the original rule the phrase: "where the identity of all objects is known". The rule (\ref{DLSregula}) can be intuitively expressed as follows: "A Homeworker is a {\em known} person who lives at and works at the same {\em known} place".

A combined knowledge base $(KB, P)$ may be divided into an {\em intensional part}, which contains knowledge independent of any specific instances, and an {\em extensional part}, which contains factual knowledge. 

\subsection{Problem of onto-relational frequent pattern discovery}

In this subsection we formally define the problem of frequent pattern discovery from knowledge bases represented in the DL-safe rules, as it is addressed in this paper. Initial formulation of this problem has been presented in \cite{orai_ruleml2005}. This subsection specializes it. Let us start with an example of a combined knowledge base $(KB, P)$.

\newtheorem{example}{Example}
\begin{example}[Example knowledge base $(KB, P)$] \label{ex:KB_statement}
Given is a knowledge base $(KB, P)$ describing bank services, presented in Table~\ref{tab:kb}.
For the clarity of presentation, non-DL-predicates are denoted with prefix $p\_$. 
\begin{table}
\small
\begin{tabular*}{1.0\textwidth}[t]{l | p{6.5cm} l }
\hline
\textbf{Terminology in \textit{KB}}\\
\hline
$Client \equiv \exists isOwnerOf$ &	A client is defined as an owner of something. \\
$\top \sqsubseteq \forall isOwnerOf$.$Account \sqcup CreditCard$	&	The range of $isOwnerOf$ is a disjunction of $Account$ and $CreditCard$.  	\\

$\exists isOwnerOf^{-} \sqsubseteq Property$ &	Having an owner means being a property. \\
$Gold \sqsubseteq CreditCard$ &	$Gold$ is a subclass of $CreditCard$.  	\\
	&		 \\
$relative \equiv relative^{-}$ 
  	&	The role $relative$ is symmetric.	\\
   	&	 \\
$Account \sqsubseteq \exists isOwnerOf^{-}$  	&	Each account has an owner. \\
   	&	 \\
$\top \sqsubseteq \forall hasMortgage$.$Mortgage$ &	The range of $hasMortgage$ is $Mortgage$.	\\
$\top \sqsubseteq \forall hasMortgage^{-}$.$Account$  &	The domain of $hasMortgage$ is $Account$.	\\
$\top \sqsubseteq\ \leq 1 hasMortgage^{-}$ &	A mortgage can be associated up to one account.	\\

$Account \equiv \neg CreditCard$  &	$Account$ is disjoint with $CreditCard$.\\
   	&		\\
\hline
\textbf{Rules in \textit{P}}\\
\hline
$p\_familyAccount(x,y,z)\leftarrow Account(x),$ & $p\_familyAccount$ is an account that is \\
\hspace{15pt} $isOwner(y,x), isOwner(z,x), $ & co-owned by relatives.\\
\hspace{15pt}  $relative(y,z), \mathcal{O}(x),\mathcal{O}(y),\mathcal{O}(z)$ & \\
& \\ 
$p\_sharedAccount(x,y,z) \leftarrow$  	&  Family account is a shared account.   \\
\hspace{15pt}  $p\_familyAccount(x,y,z)$ &\\
 & \\  	 
$p\_man(x) \vee p\_woman(x) \leftarrow Client(x), \mathcal{O}(x)$ 	& A client is a man or a woman.\\
\\
\hline
\textbf{Assertions in \textit{(KB, P)}}\\
\hline
$p\_woman(Anna)$ & $Anna$ is a woman.\\
$isOwnerOf(Anna,a1)$ & $Anna$ is an owner of $a1$.\\
$hasMortgage(a1,m1)$ & Mortgage $m1$ is associated to account $a1$.\\ 

$relative(Anna,Marek)$ & $Anna$ is a relative of $Marek$.\\
   	&		\\
$isOwnerOf(Jan,cc1)$ & $Jan$ is an owner of $cc1$.\\
$CreditCard(cc1)$ & $cc1$ is a credit card.\\
   	&		\\ 
$isOwnerOf(Marek,a1)$ & $Marek$ is an owner of $a1$.\\
   	&		\\
$Account(account2)$ & $account2$ is an account.\\
   	&		\\
$\mathcal{O}(i)$ for each explicitly named individual $i$ & Enumeration of all ABox individuals\\
\hline
\end{tabular*} 
\caption{An example of a combined knowledge base.}
\label{tab:kb}
\end{table}
\normalsize
This knowledge base could not be represented in description logic or Datalog alone. Definite Horn rules require all variables to be universally quantified, and therefore it is impossible to assert the existence of unknown individuals. For example, it is impossible to assert that each account must have an owner. Moreover, Horn rules are unable to represent disjunctions in rule heads, and hence, it is not possible to model that the range of the role $isOwnerOf$ is a disjunction of $Account$ and $CreditCard$. In description logic, in turn, it is not possible to define a "triangle" relationship that is modelled by the rule defining $p\_familyAccount$. $\square$ 

\end{example}

The task addressed in this paper is frequent pattern discovery. The patterns being found in our approach have the form of {\em conjunctive queries} over the combined knowledge base $(KB, P)$. An answer set of a query contains individuals of a user-specified reference concept \textit{\^C}. We assume that the queries are positive, i.e. they do not contain any negative literals. Moreover, we assume that the queries are DL-safe. This means that all variables in a query are bound to instances explicitly occurring in $(KB, P)$, even if they are not returned as a part of the query answer. In this context a query is defined as follows.

\begin{definition}[Conjunctive DL-safe queries]
\label{def:query}

Let $(KB, P)$ be a combined knowledge base in DL-safe rules with $KB$ represented in $\mathcal{SHIF}$. Let ${\bf x} = \{x_1, \ldots, x_m\}$ be a set of undistinguished variables (the variables whose bindings are not a part of the answer) and {\em key} be the only distinguished variable (that is the variable whose bindings are returned in the answer). A {\em conjunctive query} $Q(key,{\bf x})$ over $(KB,P)$  is a rule using a special predicate name (that does not belong to the set of names occurring in $(KB, P)$) in the head, and whose body is a finite conjunction of atoms of the form $B(t_1,\ldots, t_n)$, where $B$ is an $n$-ary predicate (either from the $KB$ component or from the disjunctive Datalog program $P$) and $t_i, i=1, \ldots, n$, is the distinguished variable {\em key} or a variable from ${\bf x}$.
A conjunctive query $Q(key,{\bf x})$ is {\em DL-safe} if each variable occurring in a DL-atom also occurs in a non-DL atom in $Q(key,{\bf x})$.

The inference problems for conjunctive queries are defined as follows:
\begin{itemize}
\item {\em Query answering}: An answer to a query $Q(key,{\bf x})$ w.r.t. $(KB,P)$ is an assignment $\theta $ of an individual to the distinguished variable $key$ such that $(KB,P) \models \exists {\bf x} : Q(key\theta, {\bf x} ) $.
\item {\em Query containment}: A query $Q_2(key,{\bf x}_2)$ is contained in a query $Q_1(key, {\bf x}_1)$ w.r.t. $(KB, P)$ if $(KB,P) \models \forall key : [\exists {\bf x}_2: Q_2(key,{\bf x}_2) \rightarrow \exists {\bf x}_1 : Q_1(key,{\bf x}_1)$].$\square$ 
\end{itemize}
\end{definition}

In our approach patterns are positive (i.e., without negative literals) conjunctive DL-safe queries over the combined knowledge base $(KB, P)$ addressing a user-specified reference concept \textit{ \^ C}.  The atom with a reference concept as the predicate contains the only distinguished variable $key$.

\begin{definition}[Pattern]
\label{def:pattern}

Given is a combined knowledge base $(KB, P)$. A pattern $Q$ is a conjunctive, positive DL-safe query over $(KB, P)$ of the following form:

$Q(key) = ?- $\textit{ \^ C}$(key), B_1, \ldots, B_n, \mathcal{O}(key), \mathcal{O}(x_1), \ldots, \mathcal{O}(x_m)$

\noindent where $B_1, \ldots, B_n$ represent atoms of the query, $Q(key)$ denotes that variable $key$ is the only distinguished variable, and $x_1, \ldots, x_m$ represent the undistinguished variables of the query. $Q(key)$ is called the \emph{head} of $Q$, denoted $head(Q)$, and the conjunction \textit{ \^ C}$(key), B_1, \ldots, B_n, \mathcal{O}(key), \mathcal{O}(x_1), \ldots, \mathcal{O}(x_m)$ is called the \emph{body} of $Q$, denoted $body(Q)$. A trivial pattern is the query of the form: $Q(key)=?- $\textit {\^ C}$(key), \mathcal{O}(key)$.
$\square$
\end{definition}

We assume each query posseses the \textit{linkedness} property, that is each variable in the body of a query is linked to the variable {\textit key} through a path of atoms. 

\begin{definition}[Linkedness]
A variable $x$ is \emph{linked} in a query $Q$ iff $x$ occurs in the head of $Q$ or there is an atom $B$ in the body of $Q$ that contains the variable $x$ and a variable $y$ (different from $x$), and $y$ is linked.
$\square$
\end{definition}

Examples of patterns that can be discovered from the knowlegde base introduced in Example \ref{ex:KB_statement} are presented below.

\begin{example}[Example patterns]\label{ex:patterns}
Consider the knowledge base $(KB, P)$ from Example \ref{ex:KB_statement}. Assuming that $Client$ is the reference concept $\hat{C}$, the following patterns over $(KB, P)$, may be built:\\\\
\ \ $Q_{ref}(key) = ?- Client(key), \mathcal{O}(key)$\\
\ \ $Q_1(key) = ?- Client(key), isOwnerOf(key, x), \mathcal{O}(key),\mathcal{O}(x)$\\
\ \ $Q_2(key) = ?- Client(key), isOwnerOf(key, x), p\_familyAccount(x, key, z)$\\
\ \ $Q_3(key) = ?- Client(key), isOwnerOf(key, x), isOwnerOf(key, y),\mathcal{O}(key),\mathcal{O}(x),\mathcal{O}(y)$\\
\ \ $Q_4(key) = ?- Client(key), isOwnerOf(key, x), CreditCard(x),\mathcal{O}(key),\mathcal{O}(x)$\\
\\
where $Q_{ref}$ is a \emph{reference query}, counting the number of instances of $\hat{C}$. $\square$ 
\end{example}

In order to define the task of frequent pattern discovery we need to define how to calculate the pattern support.

\begin{definition}[Support]
\label{support}
Let $Q$ be a query over a combined knowledge base $(KB, P)$, $answerset(\textit{\^ C}, Q, (KB, P))$ be a function that returns the set of all instances of concept \textit{\^ C} that satisfy query $Q$ with respect to $(KB, P)$, and let $Q_{ref}$ denote a trivial query for which the answerset contains all instances of the reference concept \textit{\^ C} in $(KB, P)$.  
\\
A support of query $Q$ with respect to the knowledge base $(KB, P)$ is defined as the ratio between the number of instances of the reference concept \textit{\^ C} that satisfy query $Q$ w.r.t. $(KB, P)$ and the total number of instances of the reference concept \textit {\^ C}:
$$support(\textit{\^ C}, Q, (KB, P)) = \frac{|answerset(\textit{\^ C}, Q, (KB, P))|}{|answerset(\textit{\^ C}, Q_{ref}, (KB, P))|}$$
$\square$
\end{definition}

The support is calculated as the ratio of the number of bindings of variable $key$ in the given query $Q$ to the number of bindings of variable $key$ in the reference query $Q_{ref}$. The reference concept \textit {\^ C} determines what is counted. Let us now calculate the support of query $Q_2$ from Example \ref{ex:patterns}.

\begin{example}
For the illustration of the support notion, consider the queries from Example \ref{ex:patterns}. The reference query has 3 items in its answer set that is 3 individuals from $(KB, P)$ that are deduced to be $Client$ due to the axiom defining a client as an owner of something. Query $Q_2$, for example, has 2 items in its answer set that is the clients that are co-owners of at least one account with their relatives ($Anna$, $Marek$). The support of query $Q_2$ is then calculated as: $support(\hat{C}, Q_2, (KB, P)) = \frac{2}{3} \approx $0.66.$\square$
\end{example}

Finally, we can formulate our task of frequent pattern discovery in a combined knowledge base $(KB, P)$.

\begin{definition}[Frequent pattern discovery]
\label{FPD}

Given
\begin{itemize}
\item a combined knowledge base $(KB, P)$ represented in DL-safe rules, where $KB$ is represented in $\mathcal{SHIF}$ and $P$ is a positive disjunctive Datalog program,
\item a set of patterns in the form of queries $Q$ that all contain a reference concept \textit{\^ C} as a predicate in one of the atoms in the body and where the variable in the atom \textit{\^ C} is the only distinguished variable,
\item a minimum support threshold \textit{minsup} specified by the user,
\end{itemize}
and assuming that queries with support $s$ are frequent in $(KB,P)$ if $s \geq minsup$, the task of {\em frequent pattern discovery} is to find the set of frequent queries.
$\square$
\end{definition}

\begin{example}
Let us assume the threshold $minsup$=0.5 and let us consider the queries from Example \ref{ex:patterns}. The set of frequent patterns is then $\{Q_{ref}, Q_1, Q_2, O_3\}$.
$\square$
\end{example} 
\section{Solution algorithm}
\label{sec:algorithm}
The main contribution of this paper is the algorithm for frequent pattern discovery in combined knowledge bases represented in DL-safe rules as described in Section \ref{sec:prelim}. Initial results on the algorithm development have been presented in \cite{orai_ekaw2006,orai_rr2008}. This section advances them.
Our method follows the usual approach where the search starts with the most general patterns and refines them to more specific ones in consecutive steps. Thus, firstly, we define the generality relation and further the refinement operator that computes a set of specializations of a pattern. 

\subsection{Generality relation}

We use a semantic generality relation in order to fully utilize the information stored in the combined knowledge base $(KB,P)$. As we have defined in Section \ref{sec:prelim}, patterns are represented as queries, so it seems natural to define the generality relation as the {\em query containment} (or {\em subsumption}) relation. 

\begin{definition}[Generality relation]
\label{GR}
Given two patterns $Q_1$ and $Q_2$ defined as queries over a combined knowledge base $(KB, P)$ (see Definition \ref{def:pattern}) we say that pattern $Q_1$ is at least as general as pattern $Q_2$ under query containment w.r.t. $(KB, P)$, $Q_1  \succeq_{\mathcal B} Q_2$, iff query $Q_2$ is contained in query $Q_1$ w.r.t. $(KB, P)$.
$\square$
\end{definition}

Theorem \ref{th:testing_containment} lays the foundations for an algorithm to test the pattern subsumption. 
\begin{theorem}\label{th:testing_containment}[Testing $\succeq_{\mathcal B}$]
Let $Q_1(key,{\bf x_1})$ and $Q_2(key,{\bf x_2})$ be two queries and $(KB, P)$ be a combined knowledge base. Let $\theta$ be a substitution grounding the variables in $Q_2$ using new constants not occuring in $(KB, P)$ (Skolem substitution). 
Then $Q_1  \succeq_{\mathcal B} Q_2$ if and only if there exists a ground substitution $\sigma$ for $Q_1$ such that
\begin{description}
\item[(i)] $head(Q_2)\theta=head(Q_1)\sigma$ and
\item[(ii)] $(KB, P) \cup body(Q_2)\theta \models body(Q_1)\sigma$
\end{description}
\end{theorem}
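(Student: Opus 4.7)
The plan is to prove both directions of the biconditional by explicit model-theoretic constructions, in the spirit of the classical Chandra–Merlin homomorphism test for conjunctive query containment, adapted to a knowledge base that may contain DL axioms and disjunctive DL-safe rules. The Skolem substitution $\theta$ plays the role of freezing $\exists \mathbf{x_2}: body(Q_2)$ into a set of ground facts, so the containment check reduces to an ordinary entailment problem over the enlarged theory $\Sigma := (KB, P) \cup body(Q_2)\theta$.

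For the direction "$\Leftarrow$", suppose a ground substitution $\sigma$ satisfies (i) and (ii). To establish $Q_1 \succeq_{\mathcal{B}} Q_2$, take an arbitrary model $\mathcal{I}$ of $(KB, P)$ and any valuation $v$ such that $\exists \mathbf{x_2}: body(Q_2)$ holds at $v$. Extend $\mathcal{I}$ to $\mathcal{I}'$ by interpreting each fresh Skolem constant $x\theta$ as the element $v(x)$; then $\mathcal{I}'$ is a model of $\Sigma$, so by (ii) it satisfies $body(Q_1)\sigma$. Since $\sigma$ is ground, the elements $\sigma(\mathbf{x_1})^{\mathcal{I}'}$ immediately witness $\exists \mathbf{x_1}: body(Q_1)$ in $\mathcal{I}$, and condition (i) ensures the value of $key$ agrees on both sides. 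As $\mathcal{I}$ and $v$ were arbitrary, the containment $\forall key: \exists \mathbf{x_2} Q_2 \to \exists \mathbf{x_1} Q_1$ holds, yielding $Q_1 \succeq_{\mathcal{B}} Q_2$.

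For the direction "$\Rightarrow$", assume $Q_1 \succeq_{\mathcal{B}} Q_2$. By construction the ground atoms $body(Q_2)\theta$ witness $\exists \mathbf{x_2}: Q_2(key\theta, \mathbf{x_2})$ in $\Sigma$, so the containment hypothesis applied to $\Sigma$ gives $\Sigma \models \exists \mathbf{x_1}: Q_1(key\theta, \mathbf{x_1})$. The remaining task is to promote this existential entailment to a concrete ground substitution $\sigma$ with $\sigma(key) = key\theta$ (which supplies (i)) and $\Sigma \models body(Q_1)\sigma$ (which supplies (ii)). For this I would invoke Herbrand's theorem together with DL-safety: every variable in $body(Q_1)$ appears in an $\mathcal{O}$-atom, forcing its bindings to explicitly named individuals of $\Sigma$, so the candidate witness tuple lies in the finite Herbrand universe consisting of the original constants together with the Skolem constants $x\theta$. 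Since $body(Q_1)$ is positive and conjunctive, and the Skolem constants are fresh (hence any fact about them must be derivable from $body(Q_2)\theta$), one can exhibit a single ground tuple $\sigma(\mathbf{x_1})$ realizing the entailment.

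The main obstacle is this ground-witness extraction step. For a non-disjunctive $P$, it is immediate from the existence of a least Herbrand model of $\Sigma$: an entailed existential always has a ground instance that is itself entailed. In the presence of disjunctive rule heads, however, general FOL entailment of $\exists \mathbf{x_1}\, \phi(\mathbf{x_1})$ does not by itself force entailment of any specific $\phi(\mathbf{t})$, so the argument has to lean on the structural peculiarities of the setting — the freshness of the Skolem constants, which isolates $body(Q_2)\theta$ from the rest of the KB, and DL-safety, which restricts the range of $\mathbf{x_1}$ to a finite set of named individuals — to ensure that a uniform ground witness exists. I expect this to be where the careful work lies; the rest of the proof is a routine unfolding of the semantics of $\succeq_{\mathcal{B}}$ and of $\models$.
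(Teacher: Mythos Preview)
Your plan matches the paper's approach closely. For the $\Leftarrow$ direction the two arguments are essentially identical: the paper phrases the step as ``uniform replacement of constants'' (exploiting that the Skolem constants introduced by $\theta$ are fresh and hence may be renamed to whatever witnesses $body(Q_2)$ in a given model), while you phrase the same move as extending the interpretation by sending each fresh constant $x\theta$ to the value $v(x)$; these are two ways of saying the same thing.

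For the $\Rightarrow$ direction, your analysis is actually more scrupulous than the paper's own proof. The paper argues exactly as you outline: take an arbitrary model $\mathcal{I}$ of $(KB,P)\cup body(Q_2)\theta$, observe that $key\theta$ answers $Q_2$ in $\mathcal{I}$ and hence answers $Q_1$ there, and then invoke DL-safety of $Q_1$ to obtain a \emph{ground} witness substitution $\phi$ making $body(Q_1)\phi$ true in $\mathcal{I}$. It then simply asserts that ``this argumentation is valid for any interpretation satisfying the constraints, so the thesis follows'', i.e.\ it does not explicitly address the point you raise --- that in the disjunctive setting the ground witness could in principle vary from model to model, so some argument is needed to extract a single $\sigma$ satisfying $(KB,P)\cup body(Q_2)\theta \models body(Q_1)\sigma$. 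The ingredients you identify (DL-safety confining all variables of $Q_1$ to the finite set of named individuals, together with the freshness of the Skolem constants) are precisely the right ones, and are implicitly what the paper is relying on; you are simply being more explicit about where the burden lies.
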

\begin{proof}
($\Leftarrow$) Assume there exists a ground substitution $\sigma$ for $Q_1$ such that $(i)$ and $(ii)$. Let $a$ be some individual, $\mathcal{I}_a$ be some interpretation of $(KB, P)$ which is a model of $(KB, P)$ such that $a$ is an answer to the query $Q_2$ in $\mathcal{I}_a$. In order to prove that $Q_1  \succeq_{\mathcal B} Q_2$ we need to prove that $a$ is also an answer to the query $Q_1$ in $\mathcal{I}_a$. By definition of query answering (Definition \ref{def:query}) there exists a substitution $\phi$ such that $a$ is identical to $key\phi$ and  $\exists body(Q_2)\phi$ is true in $\mathcal{I}_a$. Since $Q_2$ is DL-safe there must exist another substitution, $\phi '$, such that $Q_2\phi '$ is ground, $a$ is indentical to $key\phi '$ and $body(Q_2)\phi '$ is true in $\mathcal{I}_a$. Because formula $(KB, P) \cup body(Q_2)\theta \models body(Q_1)\sigma$ is valid, by the uniform replacement of constants we have $head(Q_2)\phi '=head(Q_1)\sigma$ and $(KB, P) \cup body(Q_2)\phi ' \models body(Q_1)\sigma$, so $body(Q_1)\sigma$ is also true in $\mathcal{I}_a$. Because $head(Q_2)\phi '$ is identical to $head(Q_1)\sigma$ this implies $a=key\phi '$ is an answer to the query $Q_1$. This argument follows for any interpretation $\mathcal{I}$ satisfying the initial constraints, so $Q_1  \succeq_{\mathcal B} Q_2$.
\\
($\Rightarrow$) Assume $Q_1  \succeq_{\mathcal B} Q_2$. The following arguments show that a ground substitution $\sigma$ exists. Let a substitution $\theta$ be given as in the theorem. Let $\mathcal{I}$ be a model of $(KB, P) \cup Q_2\theta$. Since $key\theta$ is an answer to $Q_2$ in $\mathcal{I}$,  $key\theta$ is also an answer to $Q_1$ in $\mathcal{I}$. Moreover, since $Q_1$ is DL-safe, there must exist a ground substitution $\phi$ such that $head(Q_2)\theta=head(Q_1)\phi$, and $body(Q_1)\phi$ is true in $\mathcal{I}$. By the uniform replacement of constants we obtain that $head(Q_2)\theta=head(Q_1)\sigma$, and $body(Q_1)\sigma$ is true in $\mathcal{I}$.
This argumentation is valid for any interpretation satisfying the constraints, so the thesis follows.    
\end{proof}

Below we prove that appending an atom to a query results in an equally or more specific query which gives an easy way to building specializations of a query.

\begin{proposition}\label{prop:cor}
Let $Q_2$ be a query over $(KB, P)$, built from query $Q_1$ by adding an atom. It holds that $Q_1$ $\succeq_\mathcal{B}$ $Q_2$. 
\end{proposition}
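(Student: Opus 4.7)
The plan is to derive this as an immediate corollary of Theorem~\ref{th:testing_containment}. By hypothesis, $Q_2$ has the same head as $Q_1$ (both have the form $Q(key)$, with $key$ the only distinguished variable) and its body is $body(Q_1)$ augmented with one additional atom $B$; in particular every variable of $Q_1$ also appears in $Q_2$. So I will verify the two conditions of the theorem for an appropriately chosen substitution $\sigma$.

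First I would fix a Skolem substitution $\theta$ for $Q_2$, as in the statement of Theorem~\ref{th:testing_containment}, and define $\sigma$ to be the restriction of $\theta$ to the variables occurring in $Q_1$. This is well-defined because $\mathit{vars}(Q_1) \subseteq \mathit{vars}(Q_2)$. Condition~(i) is then immediate: the heads of both queries consist solely of the distinguished variable $key$, and $\sigma$ and $\theta$ agree on $key$, so $head(Q_1)\sigma = Q(key\theta) = head(Q_2)\theta$.

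Condition~(ii) is where I would spend a sentence or two: I would observe that, as a set of ground atoms, $body(Q_1)\sigma$ is literally a subset of $body(Q_2)\theta$, the only extra atom being $B\theta$. Hence $body(Q_2)\theta \models body(Q_1)\sigma$ syntactically, and a fortiori $(KB,P)\cup body(Q_2)\theta \models body(Q_1)\sigma$. Applying Theorem~\ref{th:testing_containment} in the ($\Leftarrow$) direction then yields $Q_1 \succeq_{\mathcal B} Q_2$.

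There is essentially no obstacle here; the only small point requiring attention is the bookkeeping that $\sigma$ really is a ground substitution for $Q_1$ (guaranteed by $\theta$ being ground on $Q_2$ and $\mathit{vars}(Q_1)\subseteq \mathit{vars}(Q_2)$), and the observation that adding an atom cannot remove or rename any existing variable, so the distinguished variable $key$ and the undistinguished variables of $Q_1$ are preserved verbatim in $Q_2$. The proof should fit in a few lines.
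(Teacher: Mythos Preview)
Your proposal is correct, but it takes a different route from the paper. The paper argues directly from the semantic definition of query answering (Definition~\ref{def:query}): if $\theta$ is an answer to $Q_2$, then $(KB,P)\models \exists\mathbf{x}:Q_2(key\theta,\mathbf{x})$; since the body of $Q_1$ is a sub-conjunction of the body of $Q_2$, it follows immediately that $(KB,P)\models \exists\mathbf{x}:Q_1(key\theta,\mathbf{x})$, so $\theta$ is also an answer to $Q_1$. This establishes containment without invoking Theorem~\ref{th:testing_containment} at all.

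Your approach instead routes through the syntactic characterization of $\succeq_{\mathcal B}$ given by Theorem~\ref{th:testing_containment}, choosing $\sigma$ as the restriction of the Skolem substitution $\theta$ and observing that $body(Q_1)\sigma$ is literally a subset of $body(Q_2)\theta$. This is perfectly valid and the bookkeeping you flag (that $\mathit{vars}(Q_1)\subseteq\mathit{vars}(Q_2)$, so $\sigma$ is ground) is exactly right. The paper's argument is more elementary and self-contained, needing nothing beyond the observation that an existentially quantified conjunction entails any sub-conjunction; your argument, while slightly less direct, has the virtue of exercising Theorem~\ref{th:testing_containment} and showing concretely how it is meant to be applied.
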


\begin{proof}
Let us consider query $Q_1 =?- \hat C(key), B_1, \ldots, B_n$ and let us add atom $B_{n+1}$ to $Q_1$ obtaining query $Q_2 =?- \hat C(key), B_1, \ldots, B_n, B_{n+1}$. Let $\theta$ be an answer to query $Q_2$. According to Definition \ref{def:query}, $(KB,P) \models \exists {\bf x} : Q_2(key\theta, {\bf x} ) $. But since a query is a conjunction of atoms, it follows that also $(KB,P) \models \exists {\bf x} : Q_1(key\theta, {\bf x} ) $. Thus, the answer set of query $Q_2$ is a subset of the answer set of query $Q_1$ what completes the proof.
\end{proof}

A crucial property of the generality relation that allows to develop efficient algorithms is monotonicity with regard to support.

\begin{proposition}{}
\label{prop:mono}

Let $Q_1$ and $Q_2$ be two queries over the combined knowledge base $(KB, P)$ that both contain the reference concept $\textit{\^ C}$. If $Q_1  \succeq_{\mathcal B} Q_2$ then $support(\textit{\^ C}, Q_1, (KB, P)) \geq support(\textit{\^ C}, Q_2, (KB, P))$.
\end{proposition}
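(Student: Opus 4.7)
The plan is to reduce the support inequality to set inclusion of answer sets, since the denominator in the support formula (Definition \ref{support}) is identical for $Q_1$ and $Q_2$: both use the same reference query $Q_{ref}$ over the same combined knowledge base $(KB,P)$, and hence $|answerset(\hat{C}, Q_{ref}, (KB, P))|$ is a common positive normalizing constant. So it suffices to prove
\[
|answerset(\hat{C}, Q_2, (KB, P))| \leq |answerset(\hat{C}, Q_1, (KB, P))|,
\]
and for this it suffices to prove the stronger set inclusion $answerset(\hat{C}, Q_2, (KB, P)) \subseteq answerset(\hat{C}, Q_1, (KB, P))$.

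Next, I would unfold the generality relation. By Definition \ref{GR}, $Q_1 \succeq_{\mathcal{B}} Q_2$ means that $Q_2$ is contained in $Q_1$ w.r.t.\ $(KB,P)$, which by Definition \ref{def:query} is the statement
\[
(KB,P) \models \forall key : [\exists \mathbf{x}_2: Q_2(key,\mathbf{x}_2) \rightarrow \exists \mathbf{x}_1 : Q_1(key,\mathbf{x}_1)].
\]
Pick any individual $a \in answerset(\hat{C}, Q_2, (KB,P))$. By definition of query answering, this means $(KB,P) \models \exists \mathbf{x}_2 : Q_2(a, \mathbf{x}_2)$. Combining this with the containment statement instantiated at $key = a$ yields $(KB,P) \models \exists \mathbf{x}_1 : Q_1(a, \mathbf{x}_1)$, i.e.\ $a \in answerset(\hat{C}, Q_1, (KB,P))$. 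The desired inclusion follows, and dividing by the common denominator yields the claimed support inequality.

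I do not foresee a real obstacle: the proposition is essentially a bookkeeping consequence of the fact that we chose the semantic query-containment relation as our generality order, together with the fact that the normalizing count in the support uses the same $Q_{ref}$ for both queries. The only subtlety worth flagging is that $|answerset(\hat{C}, Q_{ref}, (KB,P))|$ is nonzero (otherwise support is undefined) and finite, which is guaranteed because DL-safety restricts answers to explicitly named individuals in the ABox; once this is in place, the inequality is immediate.
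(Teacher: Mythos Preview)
Your proposal is correct and follows essentially the same route as the paper: unfold $\succeq_{\mathcal{B}}$ via Definition~\ref{GR} into query containment (Definition~\ref{def:query}), deduce $answerset(\hat{C},Q_2,(KB,P))\subseteq answerset(\hat{C},Q_1,(KB,P))$, and divide by the common denominator from Definition~\ref{support}. If anything, your version is more explicit than the paper's, which states the answer-set inclusion without spelling out the elementwise argument; your remark on the denominator being nonzero and finite via DL-safety is a nice touch the paper omits.
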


\begin{proof}
If $Q_1  \succeq_{\mathcal B} Q_2$ then, by Definition \ref{GR}, query $Q_2$ is contained in query $Q_1$. Further, from Definition \ref{def:query} we conclude that since query $Q_2$ is contained in query $Q_1$ then for any possible  extensional part of $(KB, P)$, while keeping the same intensional part, the answer set of $Q_2$ is contained in the answer set of $Q_1$, and in consequence by Definition \ref{support},  $support(\textit{\^ C}, Q_1, (KB, P)) \geq support(\textit{\^ C}, Q_2, (KB, P))$, what completes the proof.
\end{proof}

The monotonicity of the query containment with regard to the query support means that none of the specializations of an infrequent pattern can be frequent. 

The generality relation $\succeq_{\mathcal B}$ is a reflexive and transitive binary relation, and so it is a {\em quasi-order} on the space of patterns. It is known \cite{NC_Wolf:1997} that any quasi-ordered space may be searched using refinement operators. 
In the next section we define the refinement operator used in our algorithm.

\subsection{Refinement operator}
\label{subsec:ref}

We define a downward refinement operator that computes a set of specializations of a query. This set is obtained using both syntax and semantics of the query. Firstly, a query is appended with a single atom according to the rules given in Definition \ref{def:ref_rules}. In the second step semantic tests are performed which may exclude further patterns from consideration.

It is convenient to represent the results of the refinement steps on a special {\em trie} structure that was introduced in the \textsc{FARMER} method \cite{FARMER:2001,FARMER:2003}. Trie is a tree with nodes corresponding to the atoms of the query, so that each path from the root to any node corresponds to a query. In consequence every node in a trie defines a query. According to Propositions \ref{prop:cor} and \ref{prop:mono} only nodes which correspond to frequent queries need to be expanded further. 

An example of the trie data structure for a data mining problem defined over the knowledge base from Example \ref{ex:KB_statement} is presented in Figure~\ref{fig:trie}. In order to build the patterns the following predicates from the knowledge base were selected: $Client$, $isOwnerOf$, $relative$, $hasMortgage$ and $p\_woman$. Notice that the special purpose predicates (the ones of the form $\mathcal{O}(x)$) are omitted from the presentation in the trie. The presence of such predicates indicates that a query is DL-safe. As we assume that all queries within our approach are DL-safe, we can omit the special purpose predicates for simplicity. The superscripts in Figure \ref{fig:trie} correspond to the two ways described in Definition 
\ref{def:ref_rules} in which atoms are added to the query.

\begin{figure}[t]
\includegraphics[width=12cm]{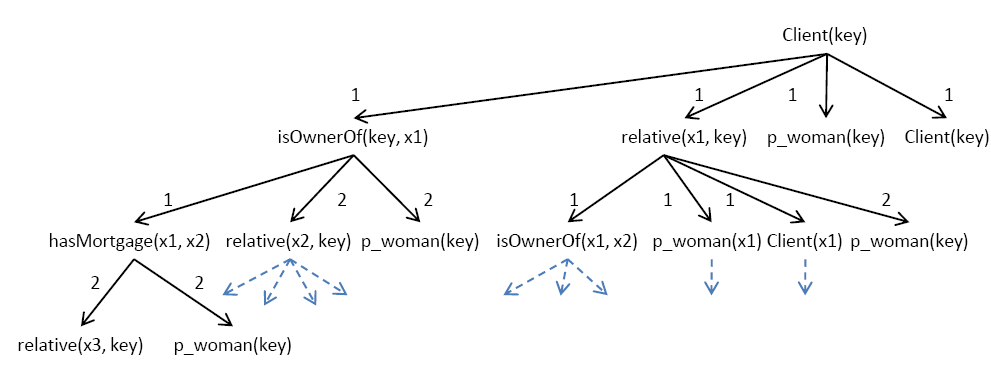}
\caption{A part of the trie constructed for the $(KB, P)$ from Example \ref{ex:KB_statement}, $\hat{C}$=$Client$, $minsup$ = 0.2.}
\label{fig:trie}

\end{figure}

\begin{definition} \label{def:ref_rules} 
Let $T$ be a trie data structure that imposes an order of atoms in a query. Let $Q$ be a query, let $B$ be $last(Q)$ that is the last atom in query $Q$, let $B_p$ be the parent of $B$ in $T$. A variable is called \emph{new} if it does not occur in any earlier atom of a query. Atoms are added to trie $T$ as:
\begin{enumerate}
\item \emph{dependent atoms} (share at least one variable with $last(Q)$, that was new in $last(Q)$), 
\item \emph{right brothers} of a given node in $T$ (these are the copies of atoms that have the same parent $B_p$ as the given atom $B$ and are placed on the right-hand side of $B$ in $B_p$'s child list), new variables are renamed such that they are also new in the copy.$\square$
\end{enumerate}
\end{definition}

The first rule introduces the dependent atoms that could not be added earlier. The dependent atoms are brothers of each other in the trie. The second rule, the right brother copying mechanism, takes care that all possible subsets but only one permutation out of the set of dependent atoms is considered.

Let us now introduce the semantic tests which are performed as the second step of the refinement procedure in order to reduce the set of patterns submitted for frequency evaluation. Due to the efficiency reasons semantic tests are performed on a knowledge base $(KB, P)$ with ground facts like concept and role assertions removed. This reduced knowledge base is denoted by $cp(KB, P)$. The first test consists in determining the query satisfiability, further ones check for some kinds of semantic redundancy as described later in this section.

The test for checking satisfiability of query $Q(key, {\bf x})$ with regard to knowledge base $cp(KB, P)$ consists in checking whether $cp(KB, P) \cup \{ \exists {key, \bf x} : Q$ \} is satisfiable that is  whether there is a model of $cp(KB, P)$ in which there is some valuation for the distinguished variable $key$ and undistinguished variables {\bf x}. The variables are skolemized, and assuming that $a$ and {\bf b} are new constants, $Q(a, \bf b)$ is asserted to $cp(KB, P)$. Then it is checked whether the updated $cp(KB, P)$ is satisfiable. The query satisfiability test described above is defined in Definition \ref{def:query_SAT} below. 

\begin{definition} \label{def:query_SAT}
Query $Q$ is \emph{satisfiable} w.r.t. a combined knowledge base $(KB, P)$ iff 
$(KB, P) \cup Q\theta $ is satisfiable, where $\theta$ is a Skolem substitution. 
$\square$
\end{definition} 

\begin{example}
Let us consider the knowledge base $(KB, P)$ from Example \ref{ex:KB_statement} and the query: 
$$Q(key) =? - Account(key), CreditCard(key), \mathcal{O}(key)$$

Since in $(KB, P)$ the concepts $Account$ and $CreditCard$ are specified as disjoint, we know a priori that it is useless to submit the query $Q$ as it cannot have any answer due to its unsatisfiability. $\square$
\end{example}

After performing the satisfiability test, the queries are further pruned in order to obtain only those candidates that are not {\em semantically redundant}. We consider two kinds of semantic redundancy. The first kind occurs when a query has redundant atoms, that is atoms that can be deduced from other atoms in the query. The second kind occurs when there are frequent queries already found in the earlier steps that are \emph{semantically equivalent} to the newly generated candidate. 

\indent
In order to avoid the first kind of redundancy, the queries are tested for \emph{semantic freeness}\index{semantic freeness, s-freeness}. Only \emph{semantically free} queries are kept for further processing. The notion of the semantic freeness has been introduced in \cite{carmr}. It is adapted to our setting as follows. 

\begin{definition} [Semantically free pattern]\label{def:sfree_pattern}
A pattern $Q$ is \emph{semantically free} or \emph{s-free} w.r.t a combined  knowledge base $(KB, P)$ if there is no pattern $Q'$, built from $Q$ by removing any atom, such that $Q \succeq_\mathcal{B} Q'$.
$\square$
\end{definition}

\begin{example}
Given is the knowledge base from Example \ref{ex:KB_statement} and the following queries to this knowledge base:\\
\\
\ \ $Q_1(key) = ?- Account(key), isOwnerOf(x, key), \mathcal{O}(key),\mathcal{O}(x)$\\
\ \ $Q_2(key) = ?- Account(key), isOwnerOf(x, key), Client(x), \mathcal{O}(key),\mathcal{O}(x)$\\
\\
Query $Q_1$ is s-free while query $Q_2$ is not. The reason why the second query is not s-free is that atom $Client(x)$ can be deduced from the other atoms of this query. More specifically, the atom $Client(x)$ can be deduced from the atom $isOwnerOf(x, key)$ as from the axioms in the knowledge base it follows that any object being asserted to the domain of $isOwnerOf$ is a $Client$. $\square$
\end{example} 

Moreover, the test for semantic freeness is performed on a query with the atom \textit{\^ C(key)} removed. It is motivated by the fact, that some queries could be pruned after the s-freeness test, that we do not necessarily would like to be pruned, just because of the obligatory presence of the reference concept in each query. 

\begin{example}
Consider the knowledge base from Example \ref{ex:KB_statement} and $Client$ as a reference concept. Then query:
$$Q(key) = ?- Client(key), isOwnerOf(key, x), \mathcal{O}(key),\mathcal{O}(x)$$
does not pass the s-freeness test from Definition \ref{def:sfree_pattern}, because the atom $Client(key)$ can be deduced from the second atom of $Q$. However, the atom $Client(key)$ contains the reference concept, which is obligatory in each query. Consider now query $Q'$, obtained by removing the atom with a reference concept from query $Q$:
$$Q'(key) = ?- isOwnerOf(key, x), \mathcal{O}(key),\mathcal{O}(x)$$
The modified query, $Q'$, is s-free. Reconsider the queries from Example \ref{ex:patterns}. Queries $Q_{ref}$, $Q_1$, $Q_2$ and $Q_4$ are s-free with regard to the modified s-freeness test, while query $Q_3$ is not s-free. $\square $
\end{example}

A candidate query may be semantically redundant not only due to redundant atoms. The second kind of redundancy occurs when a candidate query is semantically equivalent to a frequent one already found. Such patterns are also pruned, which is performed by searching the trie for a pattern equivalent to the given one. So-called \emph{optimal} refinement operator assures that no pattern is generated twice. Using the trie data structure and pruning the candidate patterns that are semantically equivalent to the ones already found, make our refinement operator \emph{optimal}. By pruning semantically equivalent patterns we achieve also the property of \emph{properness} of the refinement operator, that is every pattern $Q'$ generated by the refinement operator is more specific than the pattern $Q$ being refined ($Q'$ is never equivalent to $Q$).

\subsection{The algorithm} 

The approach proposed in this paper follows the common scheme of algorithms for finding frequent patterns which is a ''generate-and-test'' approach. In such approach candidate queries are repeatedly generated and tested for their frequency. In order to generate candidates, a refinement operator is applied. 
\\
\indent
The proposed, recursive node expansion algorithm is presented below. A node being expanded is denoted by $n_i$, $Q(key, \mathbf{x})$ denotes a query with $\mathbf{x}$ being undistinguished variables, $d$ denotes the depth of the current node in the trie $T$. The trie is generated up to the user-specified MAXDEPTH depth.

\begin{small}
\newtheorem{alg}{Algorithm}
\begin{alg}\label{alg:expandNode}
\textbf{expandNode}($n_d$, $Q(key,\mathbf{x})$, $d$, $T$, MAXDEPTH)
\begin{enumerate}
\item
 \textbf{if} $d$ $<$ MAXDEPTH \textbf{then}
\item 
	\hspace{3 mm} \textbf{while} all possible children of $n_d$ not constructed \textbf{do}
\item 
	\hspace{6 mm} construct child node $n_{d+1}$ and associated query $Q_c(key, \mathbf{x})$ using the trie data structure $T$ and refinement rules from Definition \ref{def:ref_rules}
 \item 
	\hspace{6 mm} \textbf{if} $Q_c(key, \mathbf{x})$ is satisfiable wrt $(KB, P)$ \textbf{then}
\item 
\hspace{9 mm} \textbf{if} $Q_c(key, \mathbf{x})$ is semantically free wrt $(KB, P)$ \textbf{then}
\item 
\hspace{12 mm} \textbf{if} $Q_c(key, \mathbf{x})$ is not semantically equivalent wrt $(KB, P)$ to any frequent query found earlier \textbf{then}
\item 
\hspace{15 mm} evaluate candidate query $Q_c(key, \mathbf{x})$
\item 
\hspace{15 mm} \textbf{if} $Q_c(key, \mathbf{x})$ is frequent \textbf{then}
\item 
\hspace{18 mm} addChild($n_d$, $n_{d+1}$); //\textit{add $n_{d+1}$ as a child of $n_d$}
\item 
\hspace{18 mm} $T \leftarrow T \cup n_{d+1}$; 
\item  
\hspace{3 mm} \textbf{for} all children $n_{d+1}$ of node $n_d$ \textbf{do}
\item 
\hspace{6 mm} expandNode($n_{d+1}$, $Q_c(key, \mathbf{x})$, $d+1$, $T$, MAXDEPTH)
\end{enumerate}
\end{alg}
\end{small}

\paragraph{Completeness of search}
Below we prove the \emph{completeness} of our method for pattern refinement, that is we prove that the proposed approach to pattern mining generates for each pattern $Q$ from the space of valid patterns a valid pattern $Q'$ such that $Q'$ is semantically equivalent to $Q$. Valid patterns are those, from the ones defined in Definition \ref{def:pattern}, that are linked and semantically free. 
In order to prove completeness, we relate to the work on FARMER \cite{FARMER:2003}, that originally used trie data structure for relational, frequent pattern mining. 
\\
\indent
First we prove that pruning semantically equivalent patterns (after s-freeness test or after the search on the trie) does not exclude adding all possible refinements to a pattern. 

\begin{lemma}\label{lem:linked} 
Let $(KB, P)$ be a combined knowledge base, and $Q_1$, $Q_2$ be two semantically equivalent patterns ($Q_1 \equiv_{\mathcal B} Q_2$) over $(KB, P)$. Then for each variable $x$ in $Q_1$ there exists a corresponding variable $x'$ in $Q_2$ to which the same bindings can be made as to the variable $x$. 
\label{pr:linkedness}
\end{lemma}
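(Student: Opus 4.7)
My plan is to exploit the two-sided subsumption that constitutes semantic equivalence and to apply Theorem~\ref{th:testing_containment} in both directions. Concretely, I would first fix a Skolem substitution $\theta_1$ for $Q_1$ (replacing each variable $x$ of $Q_1$ with a fresh constant $c_x$ not occurring in $(KB, P)$). From $Q_2 \succeq_\mathcal{B} Q_1$, Theorem~\ref{th:testing_containment} yields a ground substitution $\sigma_2$ for $Q_2$ such that $head(Q_1)\theta_1 = head(Q_2)\sigma_2$ and $(KB,P)\cup body(Q_1)\theta_1 \models body(Q_2)\sigma_2$. Symmetrically, from $Q_1 \succeq_\mathcal{B} Q_2$, I would pick a Skolem substitution $\theta_2$ for $Q_2$ and obtain a ground substitution $\sigma_1$ for $Q_1$.

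The second step is to build the variable correspondence. For each variable $x$ of $Q_1$ I would set the corresponding variable $x'$ of $Q_2$ to be (any) variable satisfying $x'\sigma_2 = c_x$. To see that such an $x'$ exists, observe that, since $Q_1$ is linked, every variable $x$ of $Q_1$ occurs in some body atom, so the fresh constant $c_x$ appears in $body(Q_1)\theta_1$. Because $c_x$ is fresh (it does not occur in $(KB,P)$ nor in $Q_2$), any atom of $body(Q_2)\sigma_2$ entailed by $(KB,P)\cup body(Q_1)\theta_1$ in which $c_x$ appears must have $c_x$ introduced by $\sigma_2$, i.e.\ $c_x$ must be the image under $\sigma_2$ of some variable of $Q_2$. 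The DL-safety of $Q_2$ guarantees that $\sigma_2$ grounds all variables to named constants of the extended signature, so this assignment is well defined; the case of $x = key$ is handled by the head equality $head(Q_1)\theta_1 = head(Q_2)\sigma_2$, forcing $key\sigma_2 = key\theta_1 = c_{key}$.

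Finally, I would verify that corresponding variables admit the same bindings. Suppose $a$ is a binding of $x$, i.e.\ there is a grounding $\phi$ of $Q_1$ with $x\phi = a$ and $(KB,P)\models body(Q_1)\phi$. Using the uniform replacement-of-constants argument from the proof of Theorem~\ref{th:testing_containment}, I would replace each Skolem constant $c_y$ in the entailment $(KB,P)\cup body(Q_1)\theta_1 \models body(Q_2)\sigma_2$ by $y\phi$, obtaining a ground substitution $\sigma_2'$ for $Q_2$ such that $(KB,P)\models body(Q_2)\sigma_2'$ and $x'\sigma_2' = a$, which shows that $a$ is also a legitimate binding of $x'$ in $Q_2$. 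The reverse direction is symmetric, using $\theta_2$ and $\sigma_1$.

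The main obstacle is Step~2: I have to rule out the degenerate possibility that the entailment $(KB,P)\cup body(Q_1)\theta_1 \models body(Q_2)\sigma_2$ produces $c_x$ without binding any variable of $Q_2$ to it, which would leave no candidate $x'$. This is where the DL-safety of the queries and the freshness of the Skolem constants are essential: DL-safety forces $\sigma_2$ to ground each variable of $Q_2$ to a named individual, and a fresh $c_x$ cannot be invented by $(KB,P)$ alone, so any occurrence of $c_x$ in $body(Q_2)\sigma_2$ must originate from $\sigma_2$ itself. Linkedness of $Q_1$ is used to guarantee that every variable actually participates in the body atoms on which the entailment operates, so no variable is lost in the correspondence.
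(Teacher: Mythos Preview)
Your overall plan follows the paper in spirit (use Theorem~\ref{th:testing_containment} to relate the variables of the two queries via Skolem constants, then transfer bindings by uniform replacement of constants), but you apply the theorem in the \emph{wrong direction}, and this creates a genuine gap in Step~2 that your ``main obstacle'' paragraph does not close.

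You Skolemize $Q_1$ and, from $Q_2\succeq_{\mathcal B} Q_1$, obtain a ground substitution $\sigma_2$ for $Q_2$. You then need, for every variable $x$ of $Q_1$, some variable $x'$ of $Q_2$ with $x'\sigma_2=c_x$; that is, you need the map $\sigma_2$ from the variables of $Q_2$ to constants to be \emph{onto} the set of Skolem constants $\{c_x : x \text{ in } Q_1\}$. Your argument only shows that \emph{if} $c_x$ occurs in $body(Q_2)\sigma_2$ then it must come from $\sigma_2$ (freshness), not that $c_x$ occurs there at all. A simple counterexample: take $Q_1(key)=\hat C(key),R(key,x),R(key,y)$ and $Q_2(key)=\hat C(key),R(key,z)$. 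These are semantically equivalent, yet with $\theta_1=\{key\mapsto a,x\mapsto c_x,y\mapsto c_y\}$ a valid $\sigma_2$ is $\{key\mapsto a, z\mapsto c_x\}$, and then $c_y$ is not in the range of $\sigma_2$; your construction produces no $x'$ for $y$. DL-safety and linkedness of $Q_1$ do not help here: they ensure $c_y$ appears in $body(Q_1)\theta_1$, but say nothing about $body(Q_2)\sigma_2$.

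The paper avoids this by Skolemizing $Q_2$ (not $Q_1$) and using $Q_1\succeq_{\mathcal B} Q_2$ to obtain a ground substitution $\sigma$ for $Q_1$. Now $\sigma$ is a function \emph{from} the variables of $Q_1$, so every $x$ in $Q_1$ automatically receives a constant $b=x\sigma$; the work is to show $b$ is one of the Skolem constants of $\theta$, which the paper does via the linkedness of both queries (all constants in the range of $\sigma$ are linked to the fresh $a=key\theta$, and the only named individuals linked to $a$ in $(KB,P)\cup body(Q_2)\theta$ are the Skolem constants). This direction gives the correspondence directly, without needing any surjectivity claim. Your Step~3 (transferring bindings by uniform replacement of constants) is fine once the correspondence is set up correctly; the fix is simply to swap the roles of $Q_1$ and $Q_2$ in Step~1 and drop the inversion in Step~2.
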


\begin{proof}
By definition (Definition \ref{def:pattern}) both patterns have the same distinguished variable $key$, so the thesis follows for $x=key$. Let us now provide the following argumentation for $x$ being an undistinguished variable.  
Since $Q_1 \equiv_{\mathcal B} Q_2$ then also $Q_1 \succeq_{\mathcal B} Q_2$. Suppose $\theta$ is a Skolem substitution grounding variables in $Q_2$ that satisfies the constraints from Theorem \ref{th:testing_containment}. 
By definition, the substitution $\theta$ assigns a new individual $a$ to variable $key$. The individual $a$ is an answer to $Q_2$, and since $Q_1 \succeq_{\mathcal B} Q_2$, $a$ is also an answer to $Q_1$.  
For $Q_1 \succeq_{\mathcal B} Q_2$ to be valid there must exist a grounding substitution $\sigma$ for $Q_1$ that satisfies the constraints from Theorem \ref{th:testing_containment}. Since $Q_2$ is linked, that is all of its variables are linked to the variable $key$, then also all the constants introduced by $\theta$ are linked to the individual $a$. Since $Q_1$ is linked, then all the constants that bind to variables of $Q_1$ to prove the answer $a$ have to be linked to $a$ as well. Since $a$ and all the constants introduced to the $(KB, P)$ by $\theta$ are new, then any other constants in the $(KB, P)$ are not linked to $a$. In consequence, only the constants introduced to the $(KB, P)$ by $\theta$ can be a part of the substitution $\sigma$. 
Then for each variable $x$ in $Q_1$ there must exist a constant $b$ that is assigned to $x$ by the substitution $\sigma$ and has been introduced by $\theta$. That is why there must exist a variable $x'$ in $Q_2$ for which $\theta$ introduces $b$, and what follows the same bindings that can be made to variable $x$ in $Q_1$ can be as well made to the corresponding variable $x'$ in $Q_2$.
This argumentation is valid for any variables $x$ and $x'$, what completes the proof.
\end{proof}
The following corollary is a consequence of Lemma \ref{lem:linked}.

\begin{corollary}\label{cor:linked}
Let $(KB, P)$ be a combined knowledge base, and $Q_1$, $Q_2$ be two semantically equivalent patterns ($Q_1 \equiv_{\mathcal B} Q_2$) over $(KB, P)$. Then for each variable $x$ in $Q_1$ there exists a corresponding variable $x'$ in $Q_2$ such that any atom $B$ that can be linked to $Q_1$ through the variable $x$ can be also linked to $Q_2$ through the variable $x'$.
\end{corollary}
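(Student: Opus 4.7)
The plan is to derive the corollary as a near-immediate consequence of Lemma~\ref{lem:linked}, pushing its variable-correspondence conclusion from the level of bindings to the level of syntactic linkage. The key observation is that the lemma already furnishes, for every variable $x$ of $Q_1$, a variable $x'$ of $Q_2$ that takes exactly the same bindings across all models of $(KB, P)$; the corollary only needs to rephrase this in terms of attachable atoms.

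First, I would fix the bijective correspondence $x \mapsto x'$ guaranteed by Lemma~\ref{lem:linked} for all variables of $Q_1$ (including the distinguished variable $key$, which maps to itself). Second, I would take an arbitrary atom $B$ that can be linked to $Q_1$ through $x$, meaning $B$ contains $x$ and possibly additional variables; among the remaining variables of $B$, some may coincide with other variables of $Q_1$ while the rest are fresh. I would then construct the atom $B'$ by replacing every variable $y$ of $B$ that also occurs in $Q_1$ by its corresponding variable $y'$ in $Q_2$, while leaving fresh variables in $B$ untouched (they simply play the role of new variables introduced by the refinement step into $Q_2$).

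Third, I would verify linkedness of $B'$ with respect to $Q_2$ through $x'$. Since $x$ is linked in $Q_1$, $x'$ is a variable of $Q_2$ and thus is linked in $Q_2$ (recall we assume all patterns are linked, so all their variables are linked). The atom $B'$ contains $x'$, hence by the definition of linkedness (given just before Example~\ref{ex:patterns}) $B'$ is linked to $Q_2$ through $x'$, and any further variables in $B'$ that are shared with $Q_2$ inherit their linkedness from $Q_2$. It then follows that $Q_2 \cup \{B'\}$ is a valid linked extension of $Q_2$, which is the statement required by the corollary.

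The main obstacle I anticipate is bookkeeping rather than mathematical depth: one must make the correspondence $x \mapsto x'$ act coherently on \emph{all} variables of $Q_1$ that may appear in $B$ simultaneously (not only on $x$), and one must make sure that fresh variables of $B$ are not accidentally identified with existing variables of $Q_2$. Once the renaming is fixed consistently by Lemma~\ref{lem:linked}, no further semantic argument is needed, since linkedness is a purely syntactic property and the semantic content has already been absorbed into the correspondence provided by the lemma.
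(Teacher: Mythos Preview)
Your proposal is correct and follows the same approach as the paper, which in fact gives no explicit proof at all and simply states that the corollary is a consequence of Lemma~\ref{lem:linked}. Your write-up supplies the routine syntactic bookkeeping (renaming shared variables via the correspondence, keeping fresh variables fresh) that the paper leaves implicit; the one minor slip is calling the correspondence ``bijective,'' which Lemma~\ref{lem:linked} does not assert, but your argument never actually uses bijectivity, only the forward map $x \mapsto x'$.
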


Subsequently we prove that all possible refinements of a pattern are generated.

\begin{lemma}\label{lem:complet}
Given is a trie $T$, recursively generated by Algorithm \ref{alg:expandNode}, a query $Q$ which occurs in $T$, and an atom $B \notin Q$ which is a valid refinement of $Q$. Then either: 
\begin{description}
\item[(i)] valid query $Q' = (Q_1, B, Q_2)$ exists in trie $T$, for some subdivision of $Q$ into $Q_1$ and $Q_2$, such that $Q = (Q_1,Q_2)$ or
\item[(ii)] valid query $Q''$ exists in trie $T$, such that query $Q''$ is semantically equivalent to query $Q'$.
\end{description}
\end{lemma}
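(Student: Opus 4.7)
The plan is to proceed by induction on the length of $Q$, reducing the completeness claim to an analysis of where $B$ can be canonically inserted into $Q$ so that the refinement rules of Definition~\ref{def:ref_rules} reach the insertion point, and to handle the algorithmic pruning steps by falling back to Corollary~\ref{cor:linked} whenever an exact copy of $Q'$ is blocked. The core observation is that the trie imposes a canonical order, so one should not ask whether $B$ can be appended at the end of $Q$ (which is not always legal), but rather identify some position at which $B$ becomes either a \emph{dependent atom} of the preceding atom or a \emph{right brother} of an atom already present.

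Concretely, I would first classify the variables of $B$: the distinguished variable $key$, the undistinguished variables already occurring in $Q$, and the genuinely new variables. Let $Q_1$ be the shortest prefix of $Q$ such that every non-new variable of $B$ occurs in $Q_1$, and let $Q_2$ denote the remainder, so that $Q=(Q_1,Q_2)$ and the candidate placement is $Q'=(Q_1,B,Q_2)$. By choice of $Q_1$, the atom $B$ shares a variable with $last(Q_1)$; if that shared variable was introduced as new in $last(Q_1)$, then $B$ qualifies as a dependent child of $last(Q_1)$ under rule~(1), otherwise $B$ can be obtained as a right-brother copy inherited through rule~(2). The atoms of $Q_2$ are then regenerated one by one after $B$, each either as a dependent atom of the current last atom or via the right-brother copying mechanism, exactly as they were regenerated along the original path to $Q$; linkedness of $Q'$ is preserved since $B$ only depends on variables already linked in $Q_1$ and $Q_2$ inherits its linkedness from $Q$.

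The remaining task is to argue that the three semantic pruning steps of Algorithm~\ref{alg:expandNode} do not eliminate this path without leaving behind a semantically equivalent witness $Q''$. Satisfiability propagates downwards along prefixes: if $Q'$ is satisfiable then so is every prefix along $(Q_1,B,Q_2)$, so the satisfiability test never prunes an ancestor. The s-freeness check cannot eliminate an ancestor either, because if some prefix $(Q_1,B,Q_2')$ with $Q_2'\subsetneq Q_2$ already has a redundant atom, then the same atom would be redundant in $Q'$ itself, contradicting the assumption that $Q'$ is valid. The last and most delicate case is the semantic-equivalence pruning: if some prefix $P$ along the path is already represented in the trie by an equivalent pattern $\tilde P$, the recursion continues from $\tilde P$ instead of $P$. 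Here I would invoke Corollary~\ref{cor:linked}: for every variable of $P$ there is a corresponding variable in $\tilde P$, hence $B$ and the atoms of $Q_2$ can be re-linked through these corresponding variables, producing a refinement $Q''$ which is semantically equivalent to $Q'$ and is generated further down in $T$ by the same refinement rules.

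The main obstacle I anticipate is the semantic-equivalence case above, because one must guarantee that the substitution of variables dictated by Corollary~\ref{cor:linked} preserves not only linkedness but also the status of $B$ as a \emph{dependent atom} or \emph{right brother} at the new position; the argument hinges on the fact that the renaming provided by the corollary is consistent across all atoms added after $\tilde P$, so that the trie refinement rules apply uniformly. Once this invariant is checked, the induction closes: either case~(i) holds with $Q'=(Q_1,B,Q_2)\in T$, or case~(ii) holds with an equivalent $Q''\in T$ obtained from the redirected path.
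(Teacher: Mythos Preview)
Your construction has a genuine gap at the choice of the insertion point for $B$. You commit to $Q_1$ being the shortest prefix of $Q$ containing every non-new variable of $B$, and then assert that the atoms of $Q_2$ are regenerated after $B$ ``exactly as they were regenerated along the original path to $Q$''. This ignores the sibling order that the trie imposes. Take $Q = (A_1, A_2, A_3)$ where $A_2$, $A_3$ and $B$ are all dependent atoms of $A_1$ and the children of $A_1$ in $T$ happen to be ordered $A_2, B, A_3$. Your $Q_1$ is $(A_1)$, so you aim for $(A_1, B, A_2, A_3)$; but $A_2$ is a \emph{left} brother of $B$, and it need not share any variable that is new in $B$, so neither rule of Definition~\ref{def:ref_rules} can place $A_2$ below $B$. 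The path $(A_1, B, A_2)$ is simply not in $T$. (Incidentally, your ``otherwise $B$ is a right-brother copy'' branch is vacuous: by minimality of $Q_1$ some non-new variable of $B$ is first introduced in $last(Q_1)$, so $B$ is always a dependent atom there.)

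The paper handles exactly this by not committing to an insertion point in advance. It takes $B_p$ with $B$ a dependent atom of $B_p$ and compares $B$ with the actual successor $B_{p+1}$ of $B_p$ in $Q$ among the children of $B_p$: if $B$ lies before $B_{p+1}$, the tail of $Q$ is rebuilt under $B$ by right-brother copying; if $B$ lies after $B_{p+1}$, then $B$ is itself copied as a right brother under $B_{p+1}$ and one recurses one step deeper. In the example this slides $B$ past $A_2$ and yields $Q' = (A_1, A_2, B, A_3)$, which \emph{is} generated. The lemma only requires \emph{some} subdivision $(Q_1, Q_2)$, and this recursion is precisely what locates it. Your additional discussion of intermediate pruning (satisfiability and s-freeness of prefixes, redirection via Corollary~\ref{cor:linked}) actually goes beyond what the paper's own proof spells out and is in the right spirit, but it cannot be completed until the insertion-point argument is repaired along these lines.
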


\begin{proof}
Consider case (i). As $B$ is a valid refinement of $Q$, there is a prefix $(Q_p,B_p)$ of $Q$ such that atom $B$ is a dependent atom of $B_p$. 
If $B_p$ is the last atom of $Q$, then it is clear that $B$, as a dependent atom of $B_p$, is generated as a refinement of $Q$ to be added at the end of the query. Dependent atom $B$ is generated by the first rule from Definition \ref{def:ref_rules} and checked for its validity (satisfiability and s-freeness). Hence, query $Q'$ is generated. Let us assume now that $B_p$ is not the last atom and it has different successor $B_{p+1}$ in query $Q$. Atom $B_{p+1}$ is also a child of $B_p$ in $T$. Then let us consider the order of $B$ and $B_{p+1}$ in the list of children of $B_p$ in trie $T$, which is one of the following:
\begin{itemize}
\item $B$ occurs before $B_{p+1}$; then $B_{p+1}$ is a right-hand brother of $B$. The right brothers copying mechanism, the second rule from Definition \ref{def:ref_rules}, will copy $B_{p+1}$ as a child of $B$; the same operations that created $Q$ will create query $Q'$ in subsequent steps.
\item $B$ occurs after $B_{p+1}$; $B$ is copied as a child of $B_{p+1}$. In order to determine the exact injection place of $B$, we recursively apply our arguments, taking into account $B_{p+1}$ and $B$.
\end{itemize}
It follows from the above arguments that query $Q'$ is always generated. After generation of query $Q'$, it is checked, in line 6 of Algorithm \ref{alg:expandNode}, if query $Q'$ is semantically equivalent to some query $Q''$, already present in the trie $T$. If it is the case, $Q''$ is kept in $T$, and $Q'$ is not added to $T$. Otherwise, the newly generated query $Q'$ is added to the trie $T$. Thus, either query $Q'$ exists in the trie $T$ or it is semantically equivalent to query $Q''$. This completes the proof.
\end{proof}

Finally we prove the completeness.

\begin{theorem} [Completeness] For every valid, frequent query $Q_1$ in the pattern space, there is semantically equivalent valid query $Q_2$ in the trie $T$. 
\label{th:completness}
\end{theorem}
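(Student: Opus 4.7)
The natural approach is induction on $n$, the number of body atoms of $Q_1$ beyond the reference atom $\hat{C}(key)$ and the special-purpose $\mathcal{O}$-atoms. In the base case $n=0$ the query $Q_1$ is the trivial query $Q_{ref}$, which is exactly the root of the trie $T$, so we may take $Q_2 = Q_1$.

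For the inductive step, the plan is to decompose $Q_1$ as $Q_1 = Q_0 \cup \{B\}$, where $Q_0$ is itself a valid (linked, s-free) frequent query of size $n$ and $B$ is a single remaining atom. By the inductive hypothesis there is a valid query $Q'_0$ in $T$ with $Q'_0 \equiv_{\mathcal B} Q_0$. I would then invoke Corollary~\ref{cor:linked}: each variable of $Q_0$ (in particular, those variables of $Q_0$ that $B$ connects to) has a counterpart in $Q'_0$, so renaming the variables of $B$ accordingly yields an atom $B'$ that is a valid refinement of $Q'_0$, and $Q'_0 \cup \{B'\} \equiv_{\mathcal B} Q_0 \cup \{B\} = Q_1$. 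Now I would apply Lemma~\ref{lem:complet} to $Q'_0$ and $B'$: either $Q'_0 \cup \{B'\}$ itself appears in $T$, or some query $Q''$ semantically equivalent to it already occurs in $T$. In either case the resulting trie element is semantically equivalent to $Q_1$ and, because $\equiv_{\mathcal B}$ implies equality of answer sets (applying Proposition~\ref{prop:mono} in both directions), it has the same support as $Q_1$ and is therefore frequent, so it survives the test in lines~7--10 of Algorithm~\ref{alg:expandNode}.

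The main obstacle is justifying the decomposition $Q_1 = Q_0 \cup \{B\}$ with $Q_0$ still \emph{valid}. Linkedness is easy to maintain: by thinking of $Q_1$ as a connected graph whose nodes are atoms and whose edges are shared variables, one can always remove a ``leaf'' atom, i.e., one whose fresh undistinguished variables occur in no other atom, and the remainder stays linked. Frequency of $Q_0$ follows from Proposition~\ref{prop:mono} since $Q_0$ is a generalization of $Q_1$. The genuinely delicate point is s-freeness of $Q_0$: s-freeness of $Q_1$ does not a priori imply s-freeness of every subquery. I would handle this by arguing that among all atom-removal orders producing $Q_1$ from $Q_{ref}$, at least one visits only s-free intermediates; concretely, if removing every candidate atom from $Q_1$ destroyed s-freeness, then $Q_1$ itself would contain a derivable atom, contradicting its s-freeness. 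A careful case analysis on which atom becomes redundant after removal, combined with the fact that derivability is preserved by dropping atoms from the context, closes this gap.

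Together, the base case and the inductive step give, for every valid frequent $Q_1$, a chain $Q_{ref} = R_0, R_1, \ldots, R_{n+1}$ of trie members with $R_i \equiv_{\mathcal B}$ the $i$-atom prefix of $Q_1$, so $R_{n+1} = Q_2$ is the required trie witness, proving completeness.
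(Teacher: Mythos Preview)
Your approach is essentially the paper's: induction on query length, with the inductive step invoking Corollary~\ref{cor:linked} to transport the removed atom to the equivalent trie query and then Lemma~\ref{lem:complet} to locate the extended query (or a semantic equivalent) in~$T$. The paper's argument is much terser---it simply strips off $last(Q_1)$ and applies the inductive hypothesis to $Q_1\setminus last(Q_1)$ without ever discussing whether that prefix is itself linked or s-free---so your careful treatment of linkedness (removing a leaf atom) and your flagging of the s-freeness issue already go beyond what the paper justifies; the paper does not supply the case analysis you sketch, it just elides the point.
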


\begin{proof} 
Let us assume that queries are generated up to the user specified length (MAXDEPTH). For query $Q_1$ of length 1 it is obvious that there is a corresponding query $Q_2$ of the form $Q(key) = ?- \hat{C}(key), \mathcal{O}(key)$ in the root of the trie (atoms of the form $\mathcal{O}(x)$ are not taken into account as described earlier).   
For query $Q_1$ of length $\geq 1$, the proof is by induction on the length of the query. Assume that an equivalent query for $Q_1\backslash last(Q_1)$ exists in trie $T$. From Corollary \ref{cor:linked} follows that any refinement that can be made to $Q_1\backslash last(Q_1)$ can be also made to any of its equivalent queries. If atom $last(Q_1)$ is a valid refinement of the equivalent query, Lemma \ref{lem:complet} applies. Hence, the thesis follows by induction.
\end{proof} 

\subsection{Implementation}
The proposed method employs several reasoning services run over a combined knowledge base $(KB, P)$ such as: (conjunctive) query answering, deciding knowledge base satisfiability, deciding concept subsumption, classifying the concept hierarchy. In order to perform all these reasoning services, specialized and complex algorithms are needed. As the implementation of such reasoning services is out of the scope of this work, to test our ideas we decided to use an external reasoner \textsc{KAON2}\footnote{http://kaon2.semanticweb.org}. 
\\
\indent
In the core of \textsc{KAON2} there is an algorithm for reducing a DL
knowledge base $KB$ into a disjunctive Datalog program
$DD(KB)$ on which the actual reasoning is performed using the techniques of deductive databases. In particular, \textsc{KAON2} uses a version of Magic Sets optimization technique, originally defined for non-disjunctive programs and recently extended to disjunctive Datalog, in order to identify the part of the database relevant to the query. And it applies semi-na\"{\i}ve, bottom-up evaluation strategy, in order to avoid redundant computation of the same conclusions. Employing these techniques makes
\textsc{KAON2} well suited for a frequent pattern mining application. It has been experimentaly shown that in case of the knowledge bases with relatively small intensional part, but large number of instances, \textsc{KAON2} outperforms
the reasoners using the classical tableaux algorithms by one to two orders of magnitude \cite{motik_aboxes,Pellet-DL-safe:2006}.
\begin{figure}[t]
\centering\includegraphics[width=8cm]{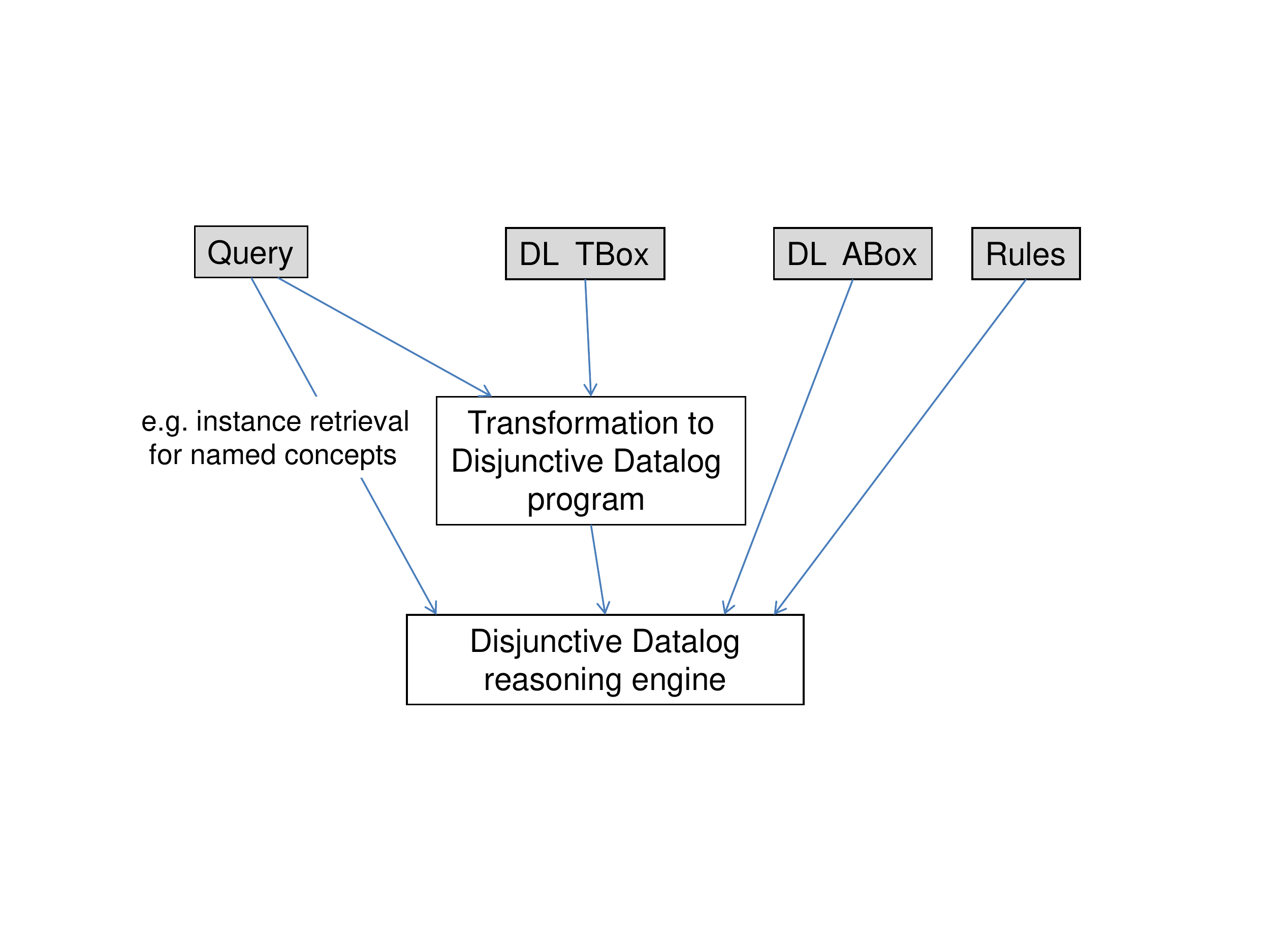}
\caption{Reasoning in \textsc{KAON2}.}
\label{fig:KAON2}
\end{figure}
\\
\indent
Figure \ref{fig:KAON2} presents an overview of the reasoning in \textsc{KAON2}.
\\
\indent
We implemented the proposed method for pattern mining in a system called \textsc{SEMINTEC}\footnote{http://www.cs.put.poznan.pl/alawrynowicz/semintec.htm} (\textit{Semantically-enabled data mining techniques}). Our implementation is written in Java (version 1.5). It uses \textsc{KAON2}'s\index{KAON2} API to manipulate and reason on combined knowledge bases. 
Figure \ref{fig:SEMINTEC_io} presents the input and output of our system and illustrates the interaction with the reasoner.
As an input to the system, the user is expected to provide the following files: \emph{setup file} (in XML format, with the parameters of the execution such as the logical and physical URI of the knowledge base, reference concept, minimum support threshold etc.) and \emph{knowledge base files} (in OWL and SWRL\footnote{www.w3.org/Submission/SWRL/} formats). As an output the system generates the files with: \emph{frequent patterns} discovered during the execution, \emph{statistics} of the execution, and a file with a \emph{trie} that stores patterns, in XML-based GraphML\footnote{http://graphml.graphdrawing.org} format. The implementation of \textsc{SEMINTEC} is publicly available\footnote{http://www.cs.put.poznan.pl/alawrynowicz/semintec.htm}. 

\begin{figure}[t]
\centering\includegraphics[width=0.8\textwidth]{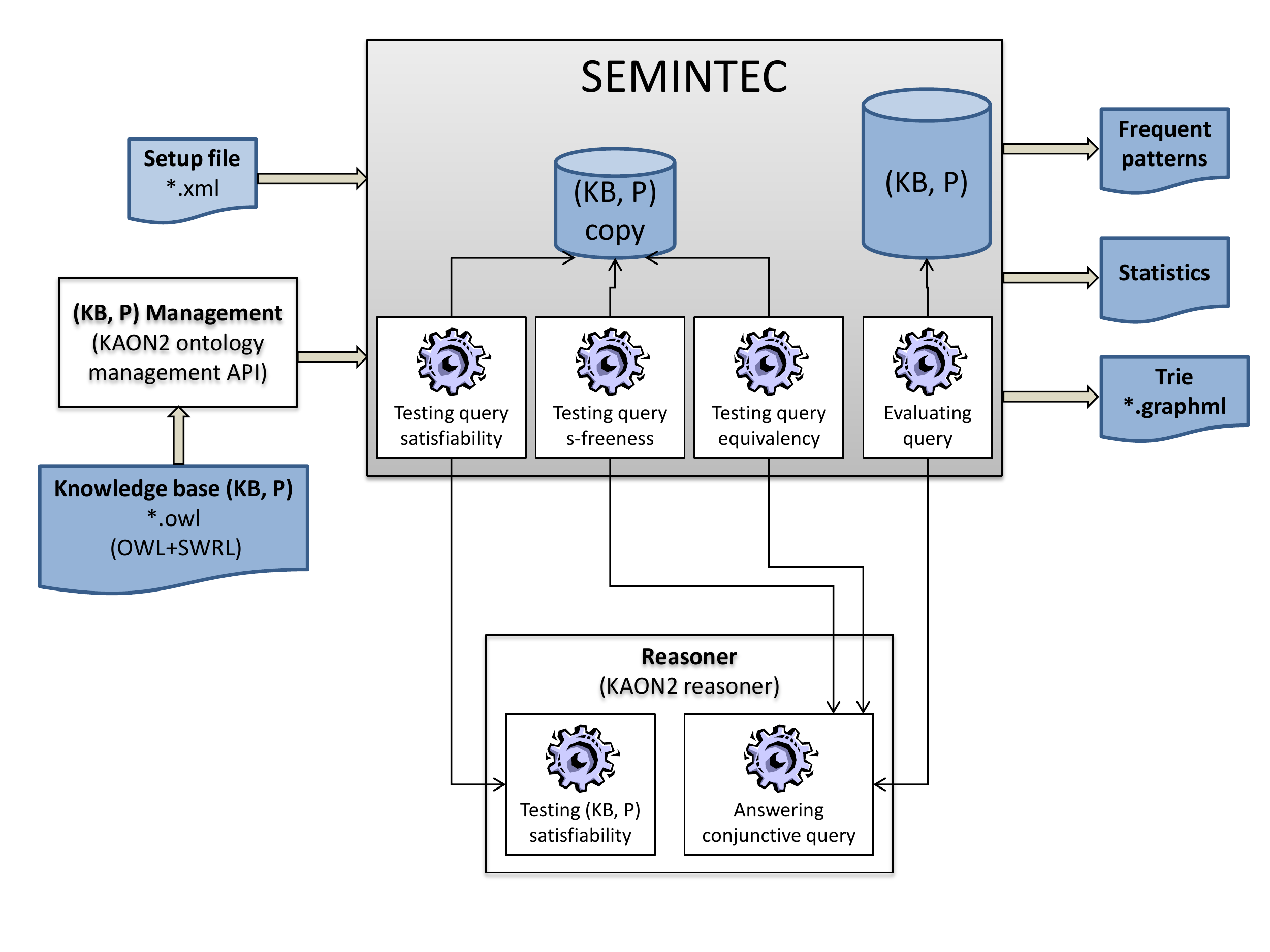}
\caption{\textsc{SEMINTEC} input/output and interaction with the reasoner.}
\label{fig:SEMINTEC_io}
\end{figure}

\section{Experimental evaluation}\label{sec:experiments}
In this section, we present an experimental evaluation of the proposed method for frequent pattern mining with the focus on the usefulness of exploiting the semantics of the knowledge base at different steps of our algorithm. In particular, the goals of the experiments were to investigate the influence of using intensional background knowledge expressed in DL with DL-safe rules on the data mining efficiency (i.e., computing time) and the quality of the results (i.e., the number and the form of the discovered patterns). We wanted to test how our method performs on datasets of different sizes and complexities, in order to obtain an idea what kinds of ontologies can be handled efficiently. In particular, the experiments were supposed to answer the following questions:
\begin{itemize}
\item how using the intensional part of the background knowledge for the semantic tests of generated patterns influences the execution time and the results of pattern discovery?
\item how the complexity of the intensional background knowledge, in particular the types of DL constructors and DL axioms, influences the execution time and the results of pattern discovery?
\item how exploiting concept and role taxonomies influences the execution time of pattern discovery?
\end{itemize}

\emph{Test datasets} 
For the tests we used three datasets, whose general characteristics is presented in Table \ref{tab:datasets}. The (\textsc{FINANCIAL})\footnote{\textsc{FINANCIAL},  http://www.cs.put.poznan.pl/alawrynowicz/financial.owl} dataset was created on the basis of a dataset from the PKDD'99 Discovery Challenge as a part of our research presented in this paper, and currently is the part of the benchmark suite of \textsc{KAON2}\index{KAON2}. 
\textsc{FINANCIAL} ontology describes the domain of banking. \textsc{FINANCIAL} dataset is relatively simple, as it does not use existential quantifiers or disjunctions. It contains, however, functional roles and disjointness constraints. Thus, it requires equality reasoning, which is difficult for deductive databases. 
\\
\indent
\textsc{SWRC} ontology, as used in our experiments, was published at the \emph{4th International EON Workshop (EON2006)}\footnote{http://km.aifb.uni-karlsruhe.de/ws/eon2006}. It was a part of the testbed\footnote{http://km.aifb.uni-karlsruhe.de/ws/eon2006/ontoeval.zip} used in the ontology evaluation session at the workshop. \textsc{SWRC} ontology (''\emph{Semantic Web for Research Communities}'') represents knowledge about researchers and research communities. Instance data, published at the EON website, describes the AIFB Institute of the University of Karlsruhe. The TBox of this ontology contains concept inclusion axioms, universal quantification, but no existential quantifiers, and no disjunctions, so it is simple. By r\textsc{SWRC} we denote our extension of this dataset by the rules presented in Table \ref{tab:datasets}.
\\
\indent
\textsc{LUBM} is a benchmark from the Lehigh University\footnote{LUBM,  http://swat.cse.lehigh.edu/projects/lubm/}, consisting of a university domain ontology and a generator of synthetic data. Existential quantifiers are used, but no disjunctions or number restrictions occur, hence the reduction algorithm of \textsc{KAON2} produces an equality-free Horn program, on which query answering can be performed deterministically. In the experiments we used r\textsc{LUBM}, an extension of \textsc{LUBM} ontology by two rules (presented in Table \ref{tab:datasets}) which was proposed by the authors of the DL-safe rules component of Pellet in \cite{Pellet-DL-safe:2006}. 
\begin{table}[t]
\caption{Characteristics of the test datasets.}\label{tab:datasets}
\footnotesize %
\begin{center}
\begin{tabular}{l | l | l | l | l | l}
\hline
dataset & $DL$ & \#concepts & \#obj. roles & \#rules & \#individuals \\
\hline
\textsc{FINANCIAL} & $\mathcal{ALCIF}$ & 60 & 16 & 0 & 17941 \\
r\textsc{SWRC} & $\mathcal{ALI}$(\textbf{D}) & 55 & 44 & 3 & 2156\\
r\textsc{LUBM} & $\mathcal{SHI}$(\textbf{D}) & 43 & 25 & 2 & 17174 \\
\hline
\multicolumn{6}{l}{Rules in r\textsc{SWRC}} \\
\hline
\multicolumn{6}{l}{$p\_knowsAboutTopic(x, z)\leftarrow Person(x), worksAtProject(x, y),isAbout(y, z)$}\\
\multicolumn{6}{l}{$p\_coAuthoredByFullProfessor(x)\leftarrow Article(x), author(x, y), FullProfessor(y)$}\\
\multicolumn{6}{l}{$finances(x,z)\leftarrow Organization(x), finances(x, y), Project(y), isAbout(y, z)$}\\
\hline
\multicolumn{6}{l}{Rules in r\textsc{LUBM}}\\
\hline
\multicolumn{6}{l}{$GraduateStudent(x)\leftarrow Person(x), takesCourse(x, y), GraduateCourse(y)$}\\
\multicolumn{6}{l}{$p\_specialCourse(z)\leftarrow FullProfessor(x), headOf(x, y), teacherOf(x, z)$}\\
\hline
\end{tabular}
\end{center}
\end{table}
\\\\
\emph{Test setting} 
All tests were performed on a PC with Intel Core2 Duo 2.4GHz processor, 2GB of RAM, running Microsoft Windows Server 2003 Standard Edition SP1. The JVM heap size was limited to 1.5GB. We used the version of \textsc{KAON2}\index{KAON2} released on 2008-01-14.   

\subsection{Results of the experiments}\label{sec:results}
\subsubsection{Analysis whether semantic tests of generated patterns are useful}\label{sec:ex_nosemresults} 
The goal of this experiment was to compare the setting where intensional background knowledge was used for testing generated candidates as well as for evaluating them with the setting where the background knowledge was used only during the candidate evaluation. We were interested in efficiency and quality of the results. 
The bias consisted of restricting the predicates, used to build patterns, only to those having any extension (to avoid testing predicates without any assertions), and giving new names to all variables in the newly added dependent atoms, except the variables shared with the last atom in a query.  
\\
\indent
In the first setting, \textsc{SEM}, the original algorithm for query expansion was used, that is Algorithm \ref{alg:expandNode}. In the second setting, \textsc{NOSEM}, the algorithm was run without the steps for checking pattern satisfiability, s-freeness and equivalence with already found frequent patterns, that is, lines 4-6 from Algorithm \ref{alg:expandNode} were omitted. However, the other parts of the solution such as the trie data structure as well as the techniques for reducing syntactic redundancy based on the trie were left unchanged, and used in the second setting as well. Hence, some assumptions made for the kinds of patterns expected as the result of the execution of our method were applied for both settings. In particular, syntactically non-redundant copies of atoms, in which output variables were given new names, were not generated as dependent atoms in both settings. Not generating copies of atoms, which is based on the assumption of generating only s-free candidate patterns, greatly influences the time and the results of the pattern mining, as without the semantic tests for redundancy, one could not avoid chains like: $Client(x), isOwnerOf(x, y_1), isOwnerOf(x, y_2), isOwnerOf(x, y_3),...$. Thus, we compare our proposed setting with the one which is not strictly naive and which lacks the most time consuming operations.
\\
\indent
The parameters measured during an execution of the experiment, were: (i) running time ($runtime$), (ii) number of candidate patterns ($cand$), (iii) number of frequent patterns ($freq$).
Good results are characterized by low number of candidates and frequent patterns, and short running time. Additionally, a ratio of frequent patterns to candidate patterns should be as high as possible, that is, as few as possible unproductive candidate patterns should be evaluated.  
\\\\
\emph{Qualitative analysis} 
Below we present and discuss some patterns discovered during the experimental evaluation. We restrict the analysis to the ontologies with real (nonsynthetic) data. 
\\
\indent
The following is one of the longest patterns discovered from the \textsc{FINANCIAL} dataset, by our method (\emph{SEM} setting):\\\\
$Q_{SEM1}(key)=Client(key),hasOwner(x_1, key),hasStatementIssuanceFrequency(x_1, x_2), \\Monthly(x_2),hasPermanentOrder(x_1, x_3), isPermanentOrderFor(x_3, x_5), Household-\linebreak Payment(x_5), hasAgeValue(key, x_7),hasSexValue(key, x_8),FemaleSex(x_8),livesIn(key, x_{10});\\ support=$0.29
\\\\
It describes "\emph{a client who is an owner of an account with monthly statement issuance frequency, and with a permament order for household payment, who is a female, lives in some region, and is at some age}". The information that $Account$ is here the domain of $hasOwner$ and $Region$ is the range of $livesIn$ comes from the \textsc{FINANCIAL} ontology. One may notice, that the region in which the client lives and the age at which she is, is not specified in this pattern. Example, shorter patterns discovered, that involve roles $hasAgeValue$ or $livesIn$ and precise their range are shown below:\\\\
$Q_{SEM2}(key)=Client(key), hasOwner(x_1, key), hasStatementIssuanceFrequency(x_1, x_2),\\  
Monthly(x_2), hasAgeValue(key, x_4), From35To50(x_4); support$=0.21\\
$
Q_{SEM3}(key)=Client(key), livesIn(key, x_1), NorthMoravia(x_1); support$=0.17
\\\\
An example of a pattern discovered by running \emph{NOSEM} setting is as follows:
\\\\
$
Q_{NOSEM1}(key)=Client(key), livesIn(key, x_1), Region(x_1); support$=1.0
\\\\
The pattern $Q_{NOSEM1}$ has the semantically redundant atom, $Region(x_1)$, due to the specification of $Region$ as the range of role $livesIn$ in the \textsc{FINANCIAL} $KB$. 
\\
\indent
Let us now present the example patterns discovered from the r\textsc{SWRC} dataset. By running the \emph{SEM} setting, the following example patterns have been discovered:\\\\
$
Q_{SEM4}(key)=Person(key), author(x_1, key), publication(x_2, x_1),  
p\_knowsAboutTo-\linebreak pic(x_2, x_3);support$=0.70
 \\
$
Q_{SEM5}(key)=Person(key), author(x_1, key), publication(key, x_2), 
Publication(x_2); \\ support$=0.75
\\\\
The meaning of pattern $Q_{SEM4}$ may seem unclear with regard to the r\textsc{SWRC} knowledge base. In the knowledge base neither ranges nor domains of $author$ and $publication$ are specified. However, from the rule defining $p\_knowsAboutTopic$ we know that its first argument represents $Person$ and the second one $Topic$. Thus, we may conclude that the pattern says that ''\emph{some person, who knows about some topic, is related to the publication who is authored by the person represented by the reference concept}''. From the intensional part of the $(KB, P)$ we do not know about the nature of this relation for $Person$ as role $publication$ is missing domain and range specifications. By deeper analysis of the knowledge base, we may notice that concept $AcademicStaff$, that is the subconcept of $Person$, is subsumed by concept $\forall publication$.$Publication$. Thus for academic staff, a particular type of persons, $publication$ range is $Publication$.
\\
\indent  
In pattern $Q_{SEM5}$, a person is related by role $publication$ with some $Publication$. By the common sense reasoning, this pattern carries redundant information. It is, however, s-free, as the range of $publication$ is not specified in the $(KB, P)$.
\\
\indent
With regard to the \emph{NOSEM} setting let us discuss the following pattern:
\\\\
$
Q_{NOSEM2}(key)=Person(key), publication(key, x_1), Publication(x_1), InProceedings(x_1);\\ support$=0.47
\\\\
Since in the $(KB, P)$, $InProceedings$ is the subconcept of $Publication$,  atom $Publication(x_1)$ is semantically redundant.
\\\\
\emph{Quantitative analysis} 
Table \ref{tab:results_nosem} shows the results for a selected support threshold for each dataset. The results are shown up to the lengths of patterns where either an execution of the proposed method (\emph{SEM}) has not exceeded the threshold of 24 hours of the running time (r\textsc{SWRC}, r\textsc{LUBM}) or the whole trie was generated in this setting (\textsc{FINANCIAL}).  
\begin{table}[t]
\caption{Results of the experiment on effectiveness of the semantic tests.}
\label{tab:results_nosem}
\footnotesize
\begin{center}
\begin{tabular}{r | r | r | r | r | r | r | r | r | r }
\hline
\multicolumn{1}{c|}{Max} & \multicolumn{4}{c|}{number of patterns} & \multicolumn{2}{c|}{reduction} & \multicolumn{2}{c|}{runtime[s]} & 
\multicolumn{1}{c}{speedup}\\
\cline{2-10}
Length & \multicolumn{2}{c}{NOSEM} &	\multicolumn{2}{c|}{SEM} & \multicolumn{2}{c|}{NOSEM/SEM} &	\multicolumn{1}{c}{NOSEM} &	\multicolumn{1}{c|}{SEM} & NOSEM/SEM \\
     & \multicolumn{1}{r}{cand} & \multicolumn{1}{r}{freq} & \multicolumn{1}{r}{cand} & freq & \multicolumn{1}{r}{cand} & freq & \multicolumn{1}{r}{} & & \\
\hline
\multicolumn{8}{l}{\textsc{FINANCIAL}
, $minsup$=0.2, reference concept=$Client$}\\
\hline
1	&		1	&	1	&	1	&	1	&	1.00	& 1.00	&	0.5	&	0.5	&	\textbf{1.07} \\
2	&		91	&	9	&	15	&	7	&	\textbf{6.07}	& \textbf{1.29}	&	42.7	&	14.2	&	\textbf{3.01} \\
3	&	  582 &	69	&	71	&	27	&	\textbf{8.20}	& \textbf{2.56}	&	303.5	&	104.5	&	\textbf{2.91} \\
4	&	 2786 & 479 &	253 &	68	&	\textbf{11.01}	& \textbf{7.04}	&	2931.6 &	569.7	&	\textbf{5.15} \\
5	&	 -	   &	-	&	569 &	131 &		-	&	-		&	-		 &  2166.9 &	-\\
6	&	-		&	-	&	1009 & 214 &	-	&	-		&	-		 &  6042.7	& -\\
7	&	-		&	-	&	1524 & 303 &	-	&	-		&	-		 &  12204.8	& -\\
8	&	-	   &	-	&	1963 & 376 &	-	&	-		&	-		 &  20200.0	& -\\
9	&	-		&	-	&	2307 & 421 &	-	&	-		&	-		 &  26346.1	& -\\
10	&	-		&	-	&	2513 & 440 &	-	&	-		&	-		 &  29614.2	& -\\
11	&	-		&	-	&	2608 & 444 &	-	&	-		&	-		 &  30309.6	& -\\
12	&	-		&	-	&	2634	& 444& 	-	&	-		&	-		&	30821.6	& -\\
\hline
\multicolumn{8}{l}{r\textsc{SWRC}, $minsup$=0.3, reference concept=$Person$}\\
\hline
1 &	   1 &   1 &	1	 &    1	& 1.00	& 1.00	& 0.1	& 0.1	 & \textbf{1.01} \\
2 &     92 &   3 &  92	 &    3	& 1.00	& 1.00	& 7.4	 & 16.5	& 0.45 \\
3 &    279 &  22 & 271	 &   14	&  \textbf{1.03}	& \textbf{1.57}	& 23.0	& 155.2	& 0.15 \\
4 &   1556 & 272 & 913	 &  100	&  \textbf{1.70}	& \textbf{2.72}	& 169.1	& 2533.5	& 0.07 \\
\hline
\multicolumn{8}{l}{r\textsc{LUBM}, $minsup$=0.3, reference concept=$Person$}\\
\hline
1 &	1	& 1    &	1	   & 1	& 1.00	& 1.00	& 0.3	   & 0.3	   & 1.00 \\
2	& 68	& 7    & 67	   & 6	& \textbf{1.01}	& \textbf{1.17}	& 12.5	& 16.5	& 0.76 \\
3	& 361	& 63   &	269	& 31	& \textbf{1.34}	& \textbf{2.03}	& 82.8	& 142.6	& 0.58 \\
4	& 2885 &	789 &	1438	& 194	& \textbf{2.01}	& \textbf{4.07}	& 9713.0	& 3486.7	& \textbf{2.79} \\
\hline
\end{tabular}
\end{center}
\end{table}
From the presented results one can conclude that with regard to the reduction in the number of patterns, there is a gain for all datasets, reaching 11.01 times for candidate patterns and 7.04 times for frequent patterns in case of the \textsc{FINANCIAL} dataset. 
\\
\indent
With regard to the running time, in case of \textsc{FINANCIAL} dataset, the speedup has been reached for all maximum lengths of patterns. For longer patterns, the \emph{NOSEM} setting was unable to finish execution in 24 hours, while executing the \emph{SEM} setting allowed to generate the whole trie of frequent patterns for \textsc{FINANCIAL} dataset.  In case of r\textsc{LUBM} dataset, the speedup has been reached for the longest, most important, maximum pattern length. For r\textsc{SWRC}, however, the \emph{NOSEM} setting was significantly better with regard to the running time.
\\
\indent
We also measured the method performance for different minimum support thresholds. The results are reported in Figure \ref{fig:results_nosem}. The bars representing the numbers of frequent patterns are superimposed on those, representing the numbers of candidate patterns. In case of the \textsc{FINANCIAL} dataset the differences between the numbers of candidate patterns in the \emph{SEM} setting in comparison to the \emph{NOSEM} setting are the largest from among those of the tested datasets. The number of candidate patterns in the \emph{SEM} setting constitutes about 9\% of that in the \emph{NOSEM} setting.  For the r\textsc{LUBM} dataset this ratio is about 45\% on average and for r\textsc{SWRC} is about 59\% on average. The number of frequent patterns in the \emph{SEM} setting is on average equal to about 14\% of the number of frequent patterns in the \emph{NOSEM} setting for the \textsc{FINANCIAL} dataset, about 22\% for r\textsc{LUBM} dataset and about 37\% for r\textsc{SWRC} dataset. Since in case of the \textsc{FINANCIAL} dataset, the differences in pattern numbers between the \emph{SEM} setting and the \emph{NOSEM} setting are the largest from among the tested datasets, relatively the biggest number of semantically redundant patterns is pruned away for this dataset, while for the r\textsc{SWRC} dataset this number is the lowest one.
\\
\indent
Let us now discuss the ratio between the number of frequent and the number of candidate patterns in case of the \emph{SEM} setting.  For the \textsc{FINANCIAL} dataset this ratio is equal 26\% on average, for the r\textsc{LUBM} dataset 13\% on average, and for the r\textsc{SWRC} 11\% on average. Thus, in case of the \textsc{FINANCIAL} dataset relatively the least computation is done to evaluate useless candidate patterns. In case of the r\textsc{SWRC} dataset the computational effort is relatively the largest. 

Summarizing, the semantic tests performed during the pattern generation were useful in terms of the number of patterns for all datasets, and in the running time for \textsc{FINANCIAL} and r\textsc{LUBM} datasets but not for r\textsc{SWRC} dataset. They were most useful for the \textsc{FINANCIAL} dataset, where relatively the least number of patterns were generated and tested in the \emph{SEM} setting in comparison to the \emph{NOSEM} setting, and where the ratio between frequent and candidate patterns in the \emph{SEM} setting was the biggest. The semantic tests were least useful in case of the r\textsc{SWRC} dataset.

\begin{figure}
  \centering
  \subfloat[Number of patterns]{\includegraphics[width=0.5\textwidth]{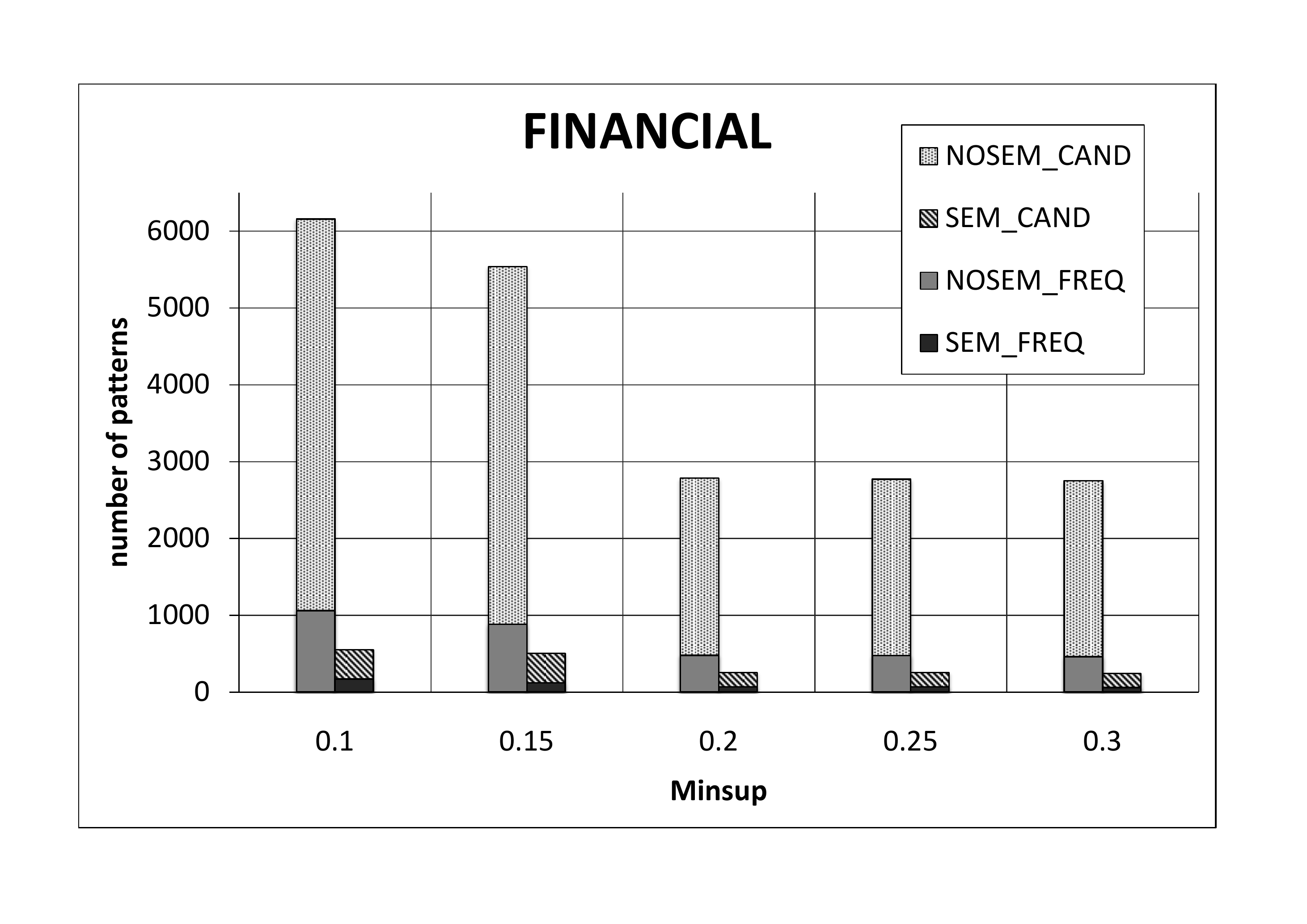}
}                
\subfloat[Running time]{\includegraphics[width=0.5\textwidth]{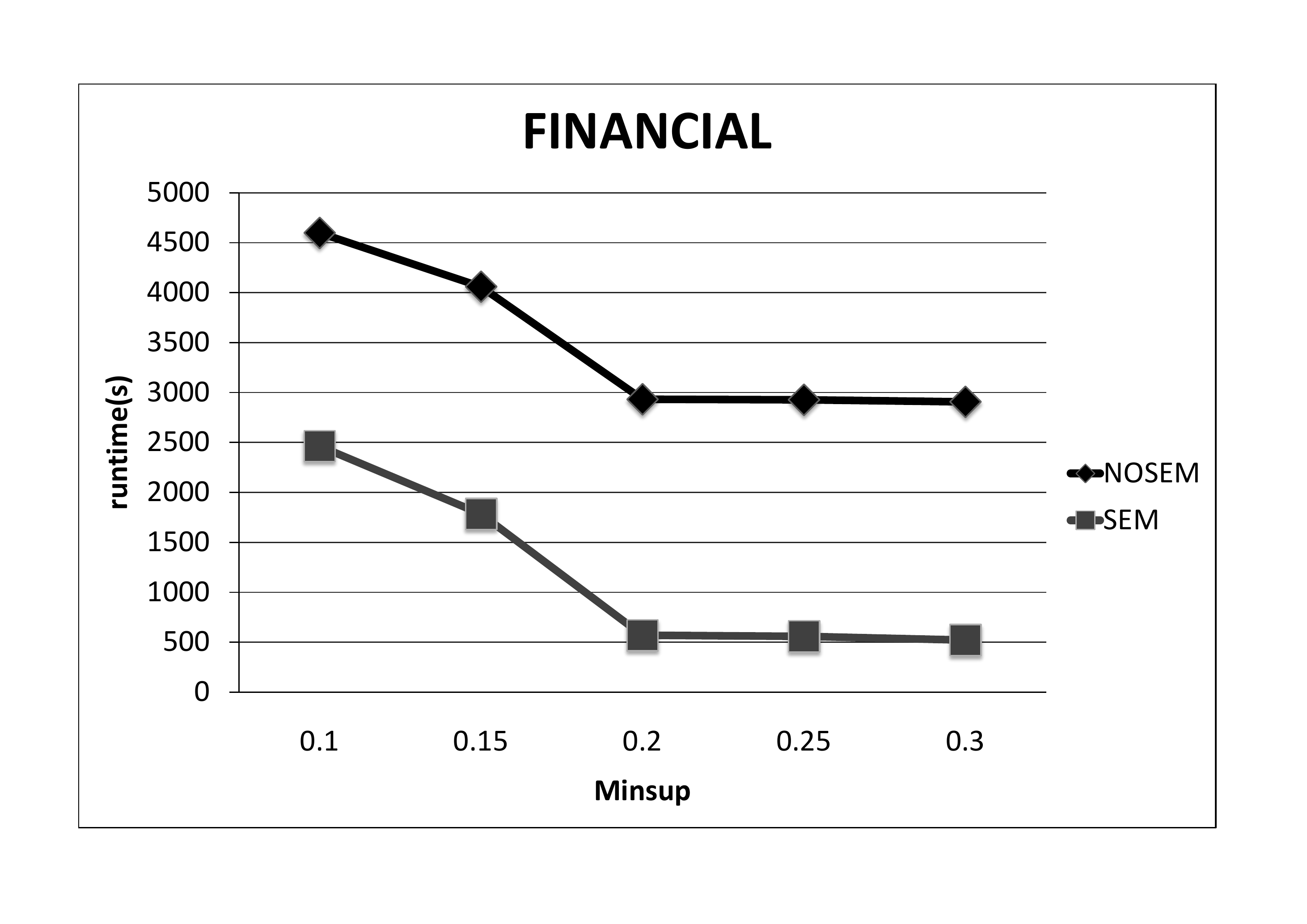}
}                
\\
  \subfloat[Number of patterns]{\includegraphics[width=0.5\textwidth]{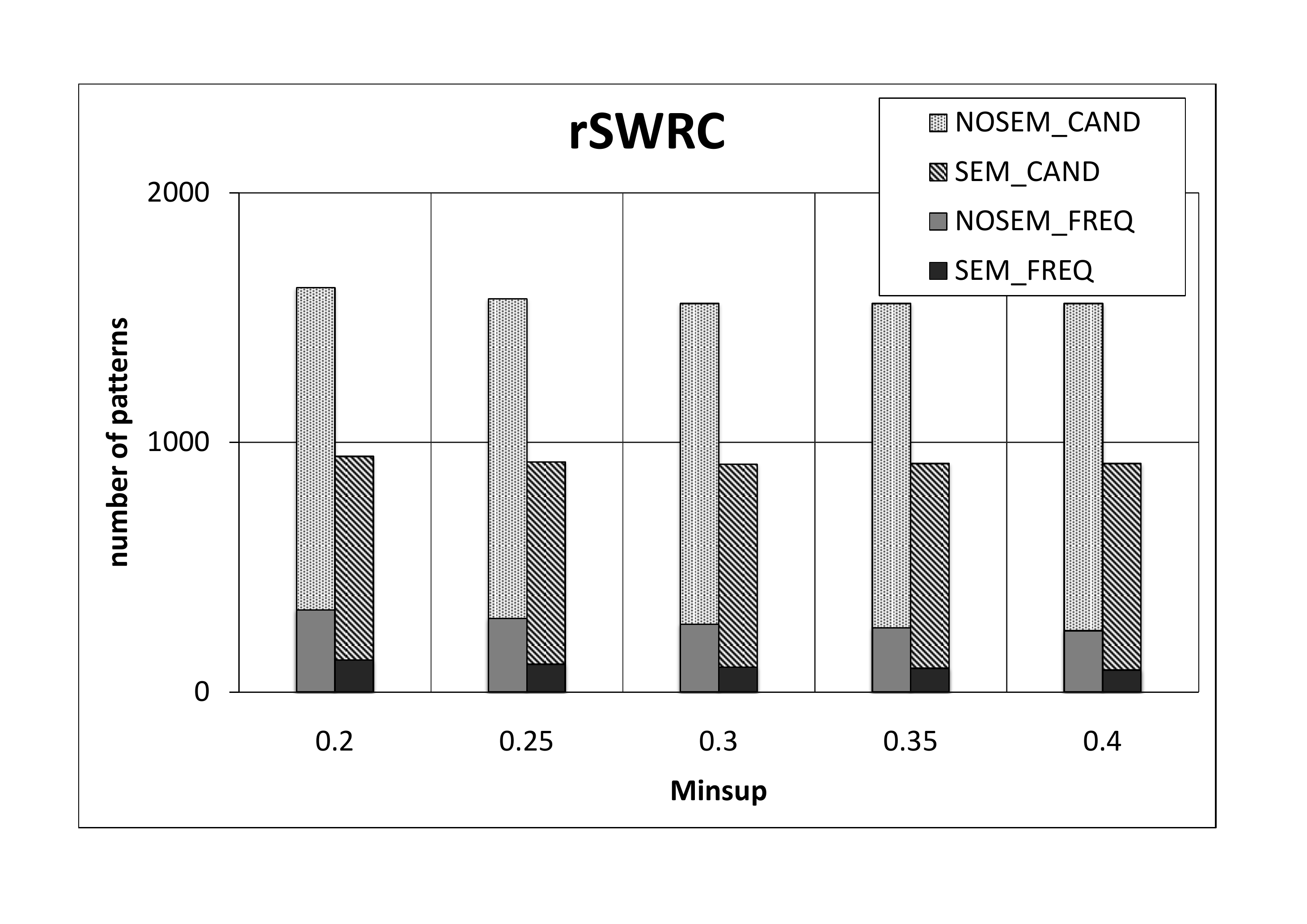}
}                
\subfloat[Running time]{\includegraphics[width=0.5\textwidth]{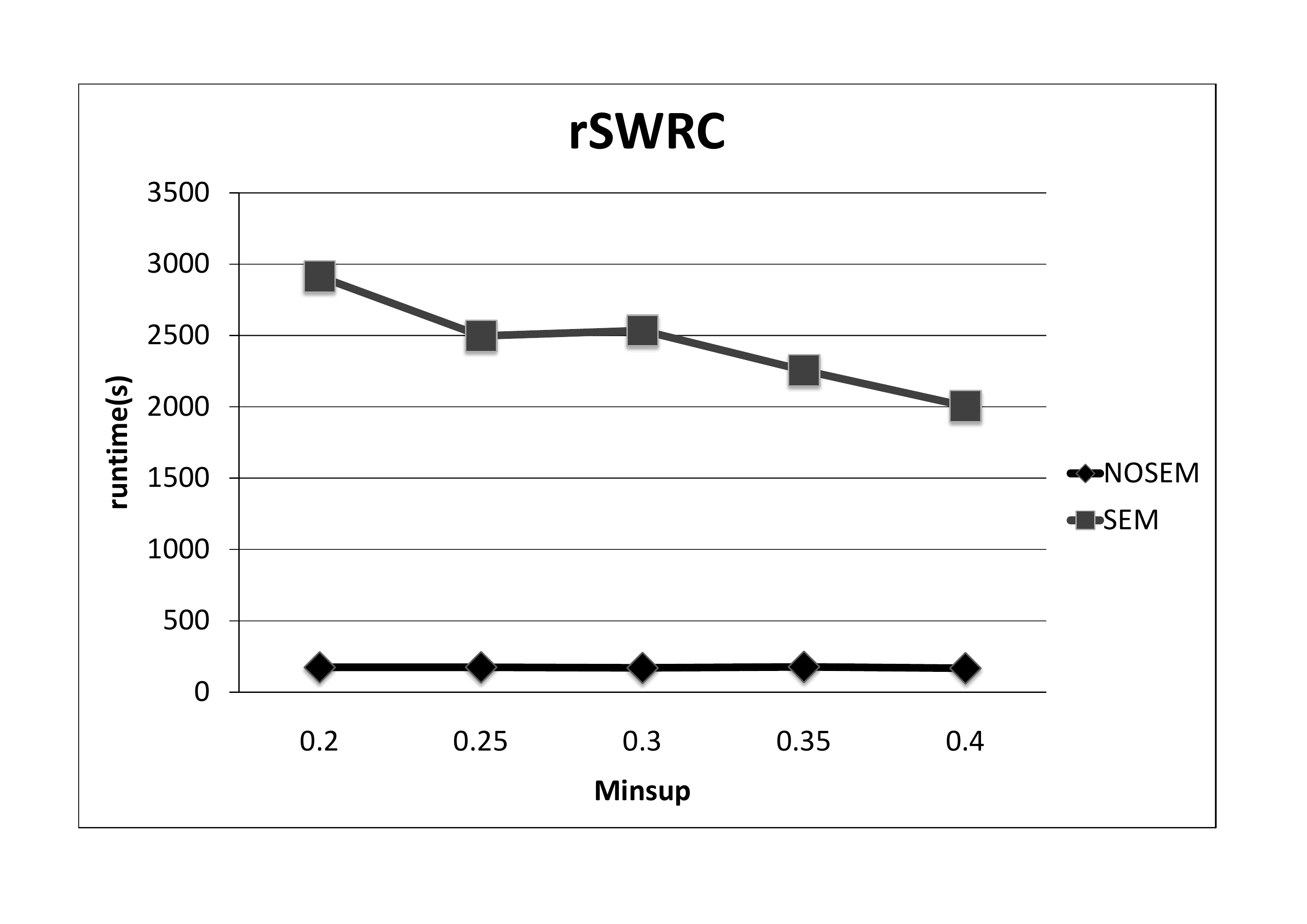}
}                
\\
  \subfloat[Number of patterns]{\includegraphics[width=0.5\textwidth]{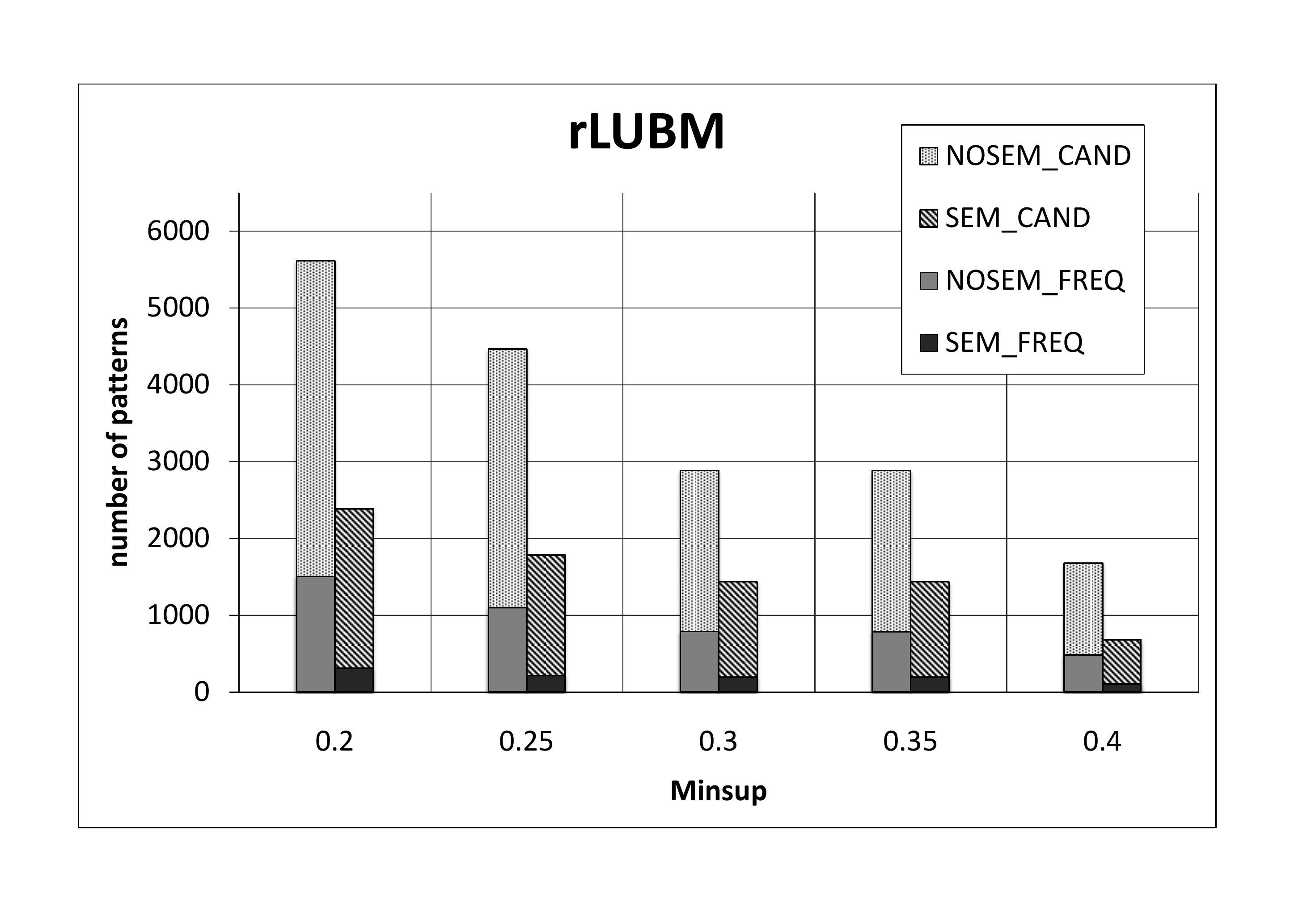}
}                
\subfloat[Running time]{\includegraphics[width=0.5\textwidth]{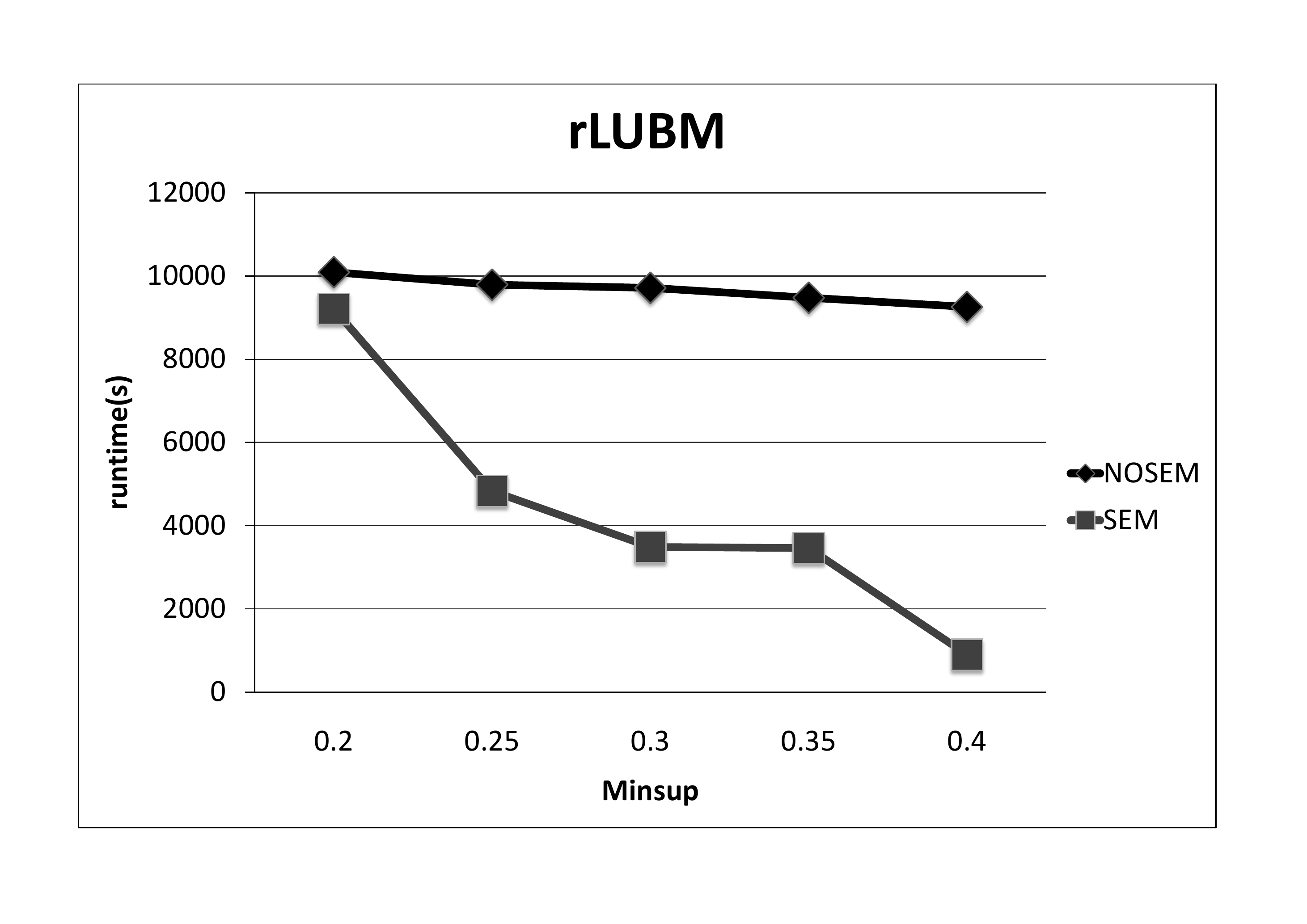}
}                
\caption{Results of the experiment, MAXLENGTH=4, \textsc{FINANCIAL}: $\hat{C}$=$Client$, r\textsc{SWRC}: $\hat{C}$=$Person$, r\textsc{LUBM}: $\hat{C}$=$Person$.}
  \label{fig:results_nosem}
\end{figure}

After the analysis of the results for r\textsc{SWRC} dataset one may pose the following question: \emph{should the semantic tests on patterns be performed together with checking their frequency or they should be performed afterwards as a postprocessing step of pattern mining?} 
For the \textsc{FINANCIAL} and r\textsc{LUBM} datasets it is clear that it was better to perform the tests together with pattern evaluation. The running times in \emph{SEM} setting (at least for the longest patterns in case of r\textsc{LUBM}) are already shorter than those in \emph{NOSEM} setting. For the r\textsc{SWRC} dataset we performed additional test. We took the patterns generated as the output of \emph{NOSEM} setting (MAXLENGTH = 4) and postprocessed them, leaving at the first step only s-free ones, and at the second step only one representative of each equivalence class of patterns. The additional execution time was 907.5s, which together with the execution time of \emph{NOSEM} setting, 169.1s (as specified in Table \ref{tab:results_nosem}), gives 1076.6s. This time is shorter than the time of the \emph{SEM} setting execution which is 2533.5s. That is, in case of r\emph{SWRC} dataset it was faster to perform data mining without semantic tests at the first step and then perform the tests as a postprocessing step.
\\
\indent
Why r\textsc{SWRC} dataset is especially hard for our approach, while the others are not, is discussed in Section \ref{sec:ex_expresivity}, which provides more insight into the influence of using intensional background knowledge during pattern generation. It is also noteworthy, that for the same intensional background knowledge, but for a bigger number of assertions (more probable case for data mining applications) it may be better to perform the semantic tests together with pattern evaluation. As an empirical proof of this claim we provide the experimental results in the following paragraph.
\\\\
\emph{Influence of the size of the dataset on the effectiveness} 
We measured how our method scales in terms of the running time with growing instance data. For this reason we used the replication of the axioms in the assertional part of the knowledge base for \textsc{FINANCIAL} and r\textsc{SWRC}. Each assertional part of \textsc{FINANCIAL}$\_n$ and r\textsc{SWRC}$\_n$ was obtained by replicating the original assertional part $n$ times.  
For r\textsc{LUBM}, we used the results of the execution of the generator of synthetic data (downloaded from \textsc{KAON2}'s testbed). Each assertional part of r\textsc{LUBM}$\_n$ was generated automatically for the number $n$ of universities. 
\\
\indent
\begin{figure}
  \centering
\subfloat[$\hat{C}$=$Client$, $minsup$=0.2]{
\includegraphics[width=0.45\textwidth]{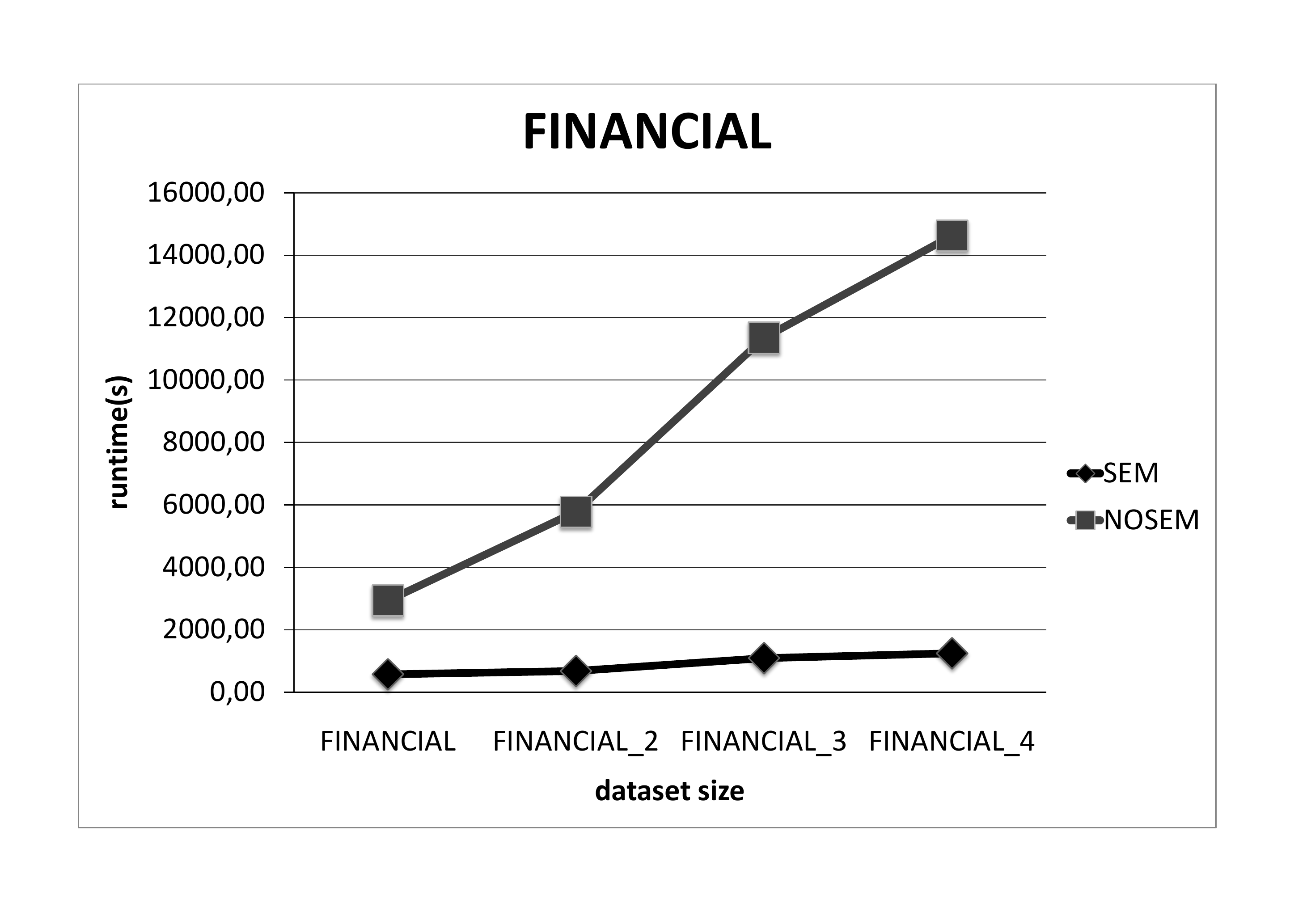}                
}
\subfloat[$\hat{C}$=$Person$, $minsup$=0.3]{
\includegraphics[width=0.45\textwidth]{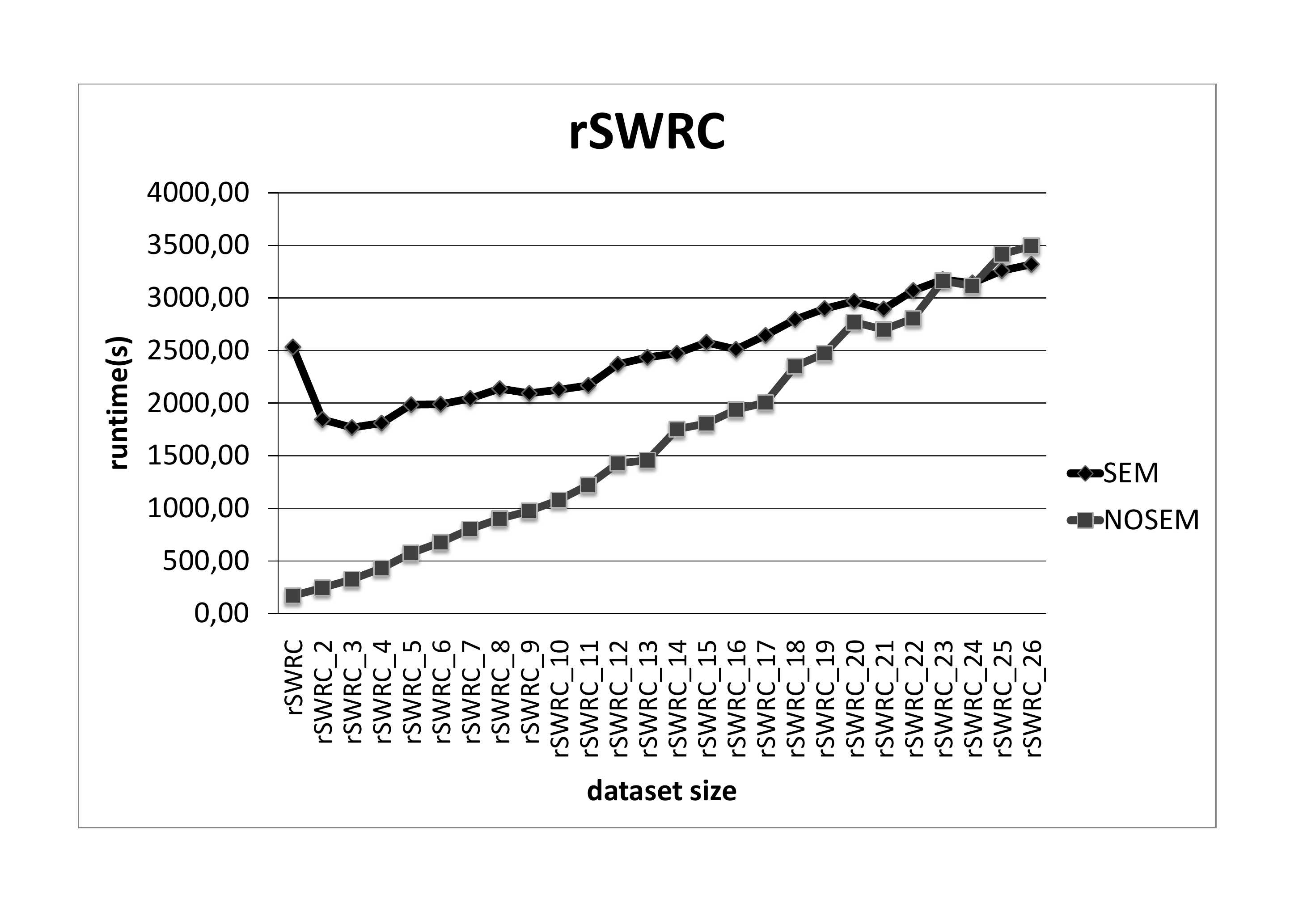}                
}
\\
\subfloat[$\hat{C}$=$Person$, $minsup$=0.3]{
\includegraphics[width=0.45\textwidth]{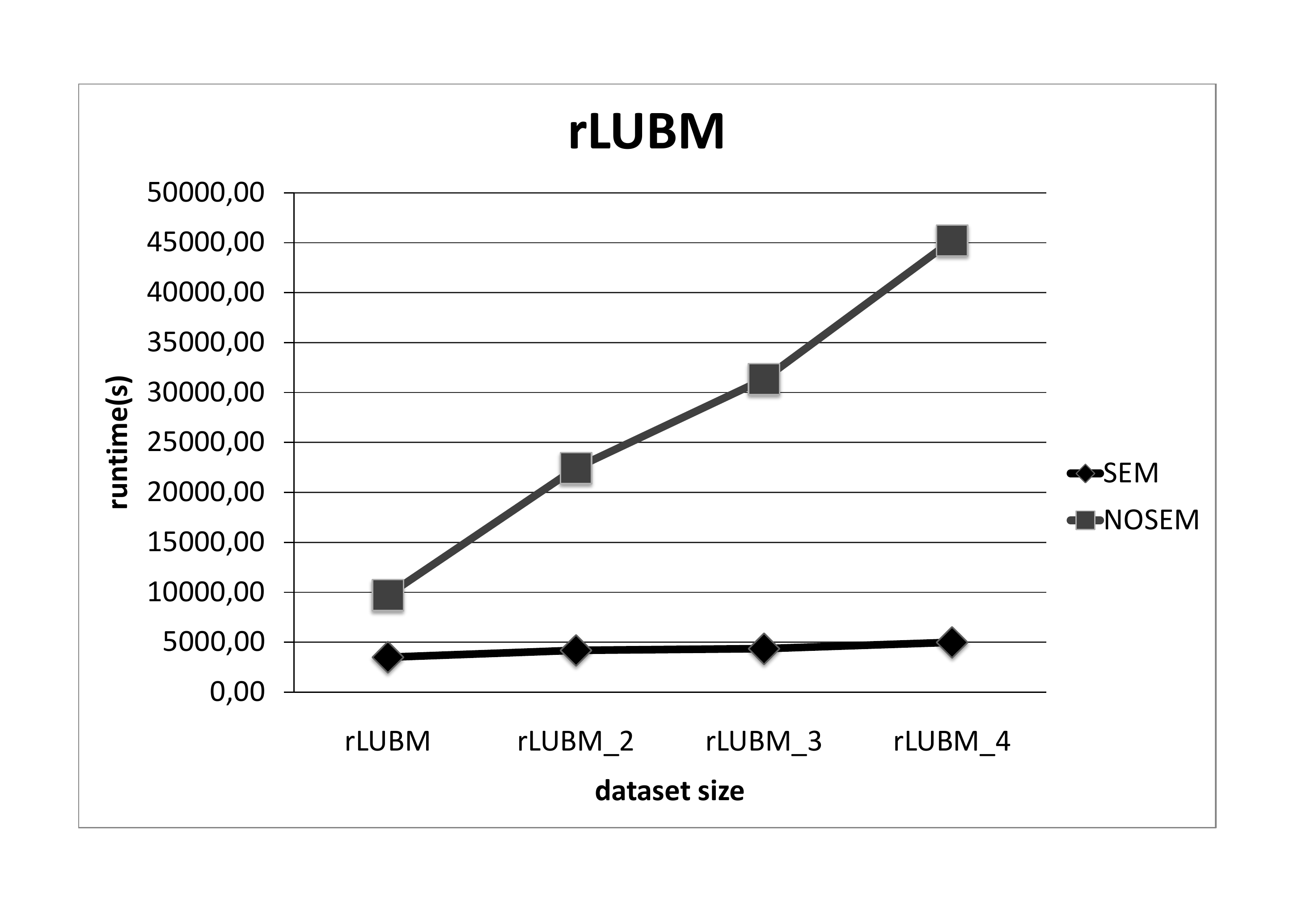}                
}
\caption{Results of the experiment, MAXLENGTH=4.}
\label{fig:results_aboxmulti_nosem}
\end{figure}
In the experimental results (Figure \ref{fig:results_aboxmulti_nosem}) one can observe that for all datasets, the bigger extensional part of the background knowledge, the relatively better performance of our method, \emph{SEM}, compared to the method, where no semantic tests are perfomed during the pattern generation, \emph{NOSEM}. That is, the overhead needed to compute the semantic tests becomes relatively smaller in comparison with the time needed to evaluate more queries on bigger sets of data. Especially interesting for us are the results on the problematic r\textsc{SWRC} dataset. One can observe that together with the growth of the assertional part of the background knowledge, the time needed to compute the \emph{NOSEM} setting increases more then the time needed to compute the \emph{SEM} setting. That's why, we perfomed the tests to show that for bigger volumes of data, described by the same intensional knowledge, a time needed to compute a trie of patterns in \emph{NOSEM} setting finally reaches and exceeds the time needed to compute the trie in the \emph{SEM} setting.   
\subsubsection{Influence of the expressivity of the dataset on the effectiveness}\label{sec:ex_expresivity}
In this experiment, additionally to the parameters measured in the experiment presented in Section \ref{sec:ex_nosemresults}, we collected the following information: (i) the number of candidates generated by the syntactic refinement rules ($gen$), (ii) the number of satisfiable candidate patterns ($sat$), (iii) the number of semantically free candidate patterns ($sfree$).
The goal was to investigate more deeply, how useful are the semantic tests on different types of datasets.
Figure \ref{fig:results_nosem_dl} shows the experimental results. 
\begin{figure}
  \centering
  \subfloat[$\hat{C}$=$Client$, $minsup$=0.2]{
\includegraphics[width=0.5\textwidth]{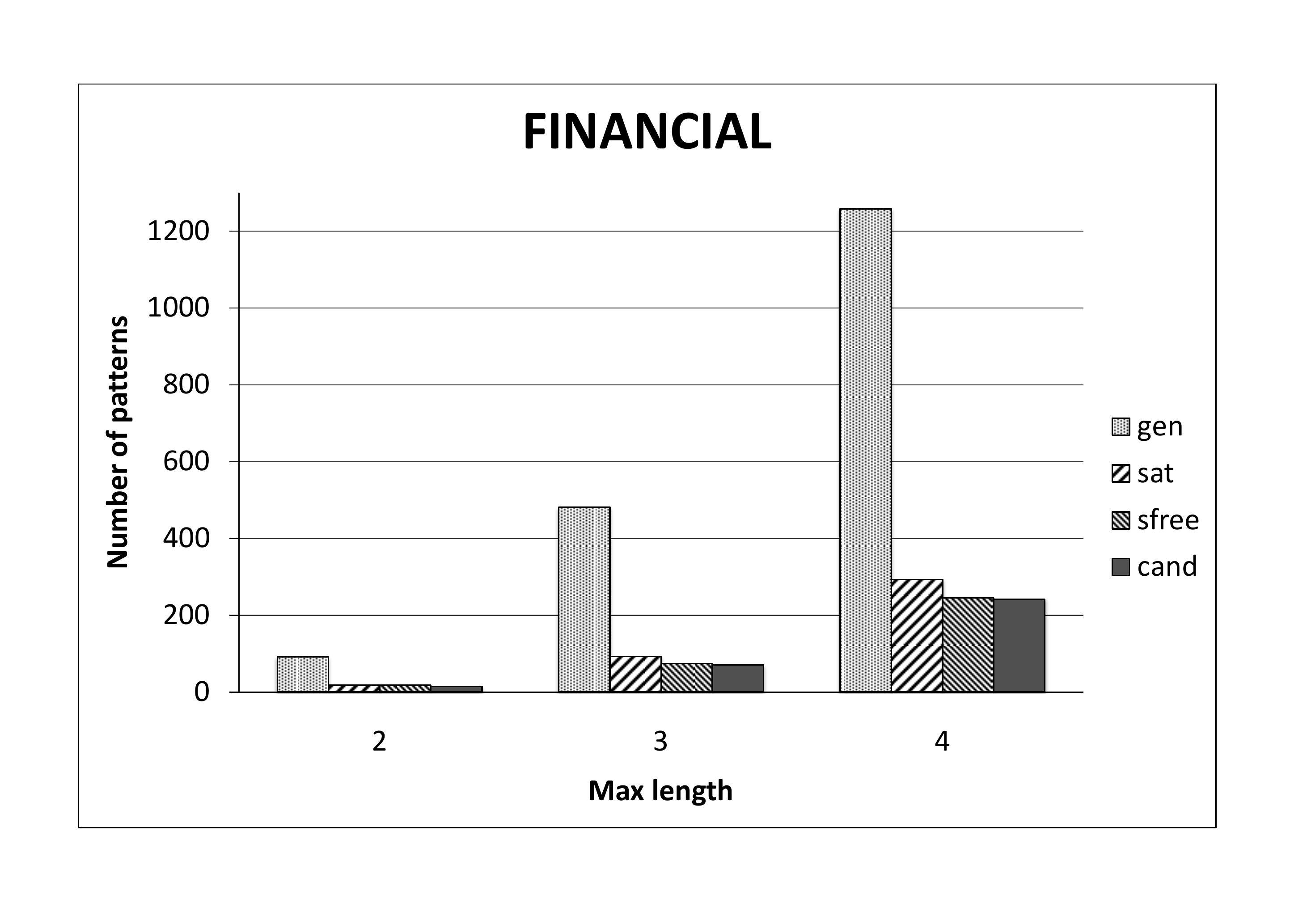} 
}                
\subfloat[$\hat{C}$=$Person$, $minsup$=0.3]{
\includegraphics[width=0.5\textwidth]{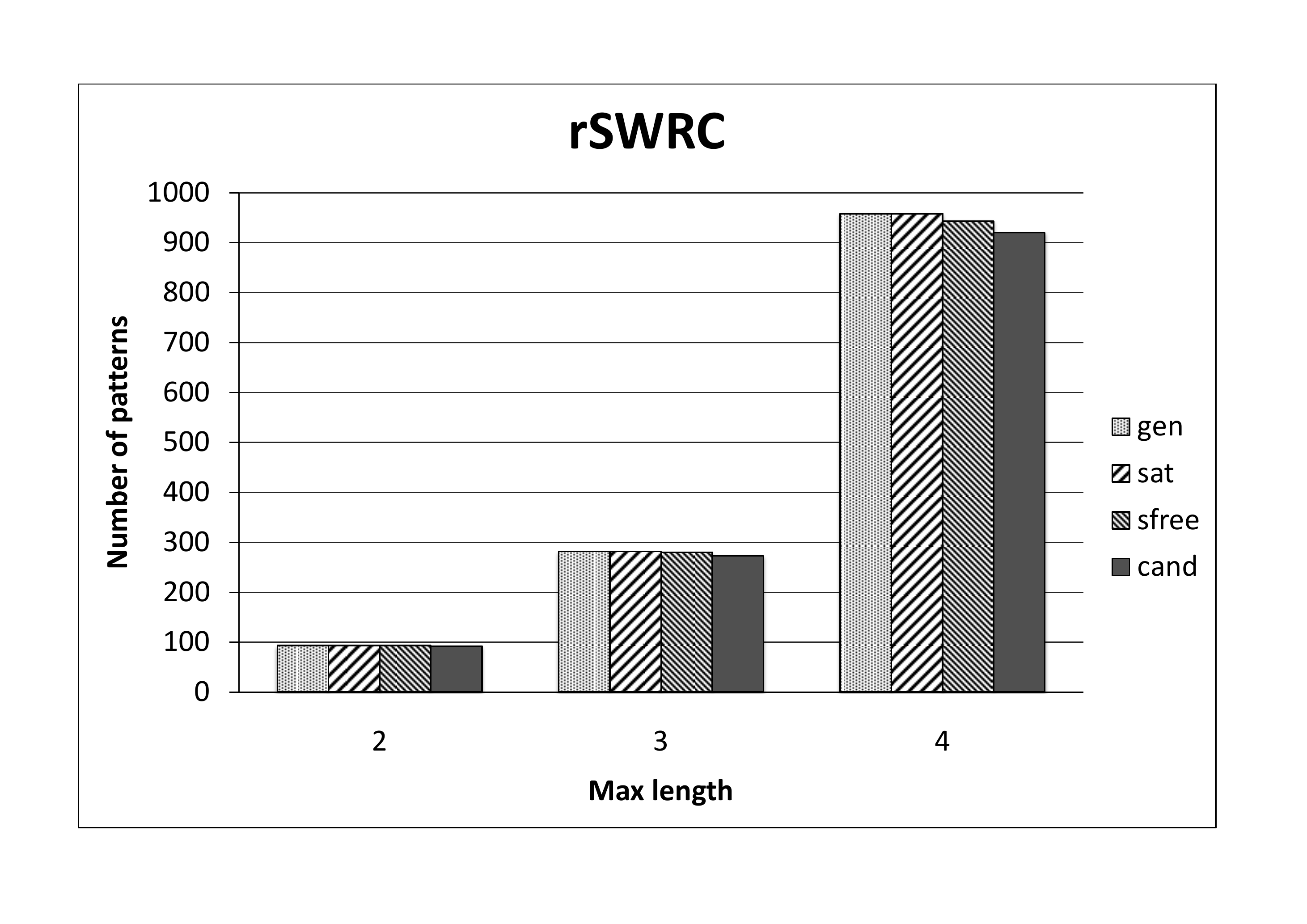} 
}  
\\              
\subfloat[$\hat{C}$=$Person$, $minsup$=0.3]{
\includegraphics[width=0.5\textwidth]{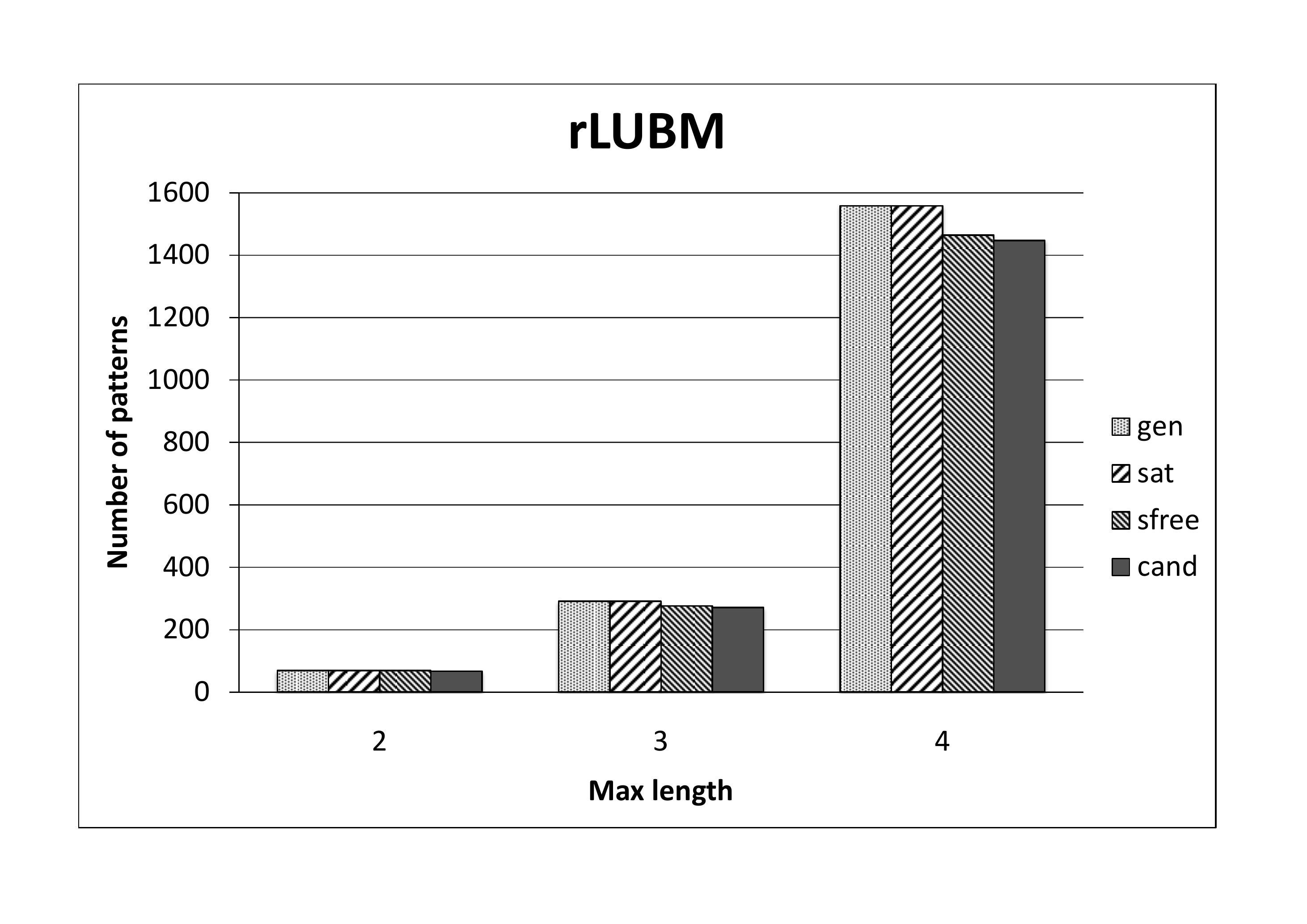} 
}                
\caption{Results of the experiment on the influence of the knowledge base expressivity on effectiveness, MAXLENGTH=4.}
  \label{fig:results_nosem_dl}
\end{figure}



The most important remark to be made after the analysis of the results is that after the \emph{satisfiability} test, in case of the \textsc{FINANCIAL} dataset many, while for the other two datasets none of the patterns were pruned away. We conclude that it is due to the \emph{disjointness constraints} present only in the \textsc{FINANCIAL} dataset, but not in the other ones. If two concepts are defined to be disjoint then adding an atom where some variable is described by one of them, while the other one describing the same variable is already present in a query is useless. For example, as concepts $Man$ and $Woman$ are defined to be disjoint in the \textsc{FINANCIAL} dataset, then testing atom $Woman(key)$ as a refinement of the pattern $Q(key)=Client(key), Man(key)$ is useless, and such refinement is pruned due to its unsatisfiability. The important property in the context of the presence of disjointness constraints is also the co-occurrence of the role domain and range specifications. For instance, an atom with a concept describing a variable that is already described by the range of some role already present in a query, may be pruned away if the concept in the role range and the given concept are disjoint. For example, in the \textsc{FINANCIAL} dataset, the range of role $hasCreditCard$ is $CreditCard$. Then, due to the disjointness of concepts $CreditCard$ and $Account$, a refinement $Account(x_1)$ of query $Q(key)=Client(key), hasCreditCard(key, x_1)$ is pruned after the satisfiability test.   
\\
\indent
Furthermore, after the analysis of the results, we conclude that the features of the intensional part of the knowledge base of the tested datasets, that helped to prune patterns after the \emph{s-freeness} test were: the organization of concepts and roles in taxonomies, the specification of domain and ranges of roles, the specification of role properties, such as role inverse, and concept definitions.
\\
\indent
In the \textsc{FINANCIAL} dataset the hierarchy of roles is flat, the hierarchy of concepts maximum 4 levels deep. In the r\textsc{SWRC} dataset the hierarchy of roles is also flat, and the hierarchy of concepts is maximum 5 levels deep. In the r\textsc{LUBM} dataset the hierarchy of roles is almost flat, with two exceptions, 2 and 3 levels deep, the hierarchy of concepts is maximum 5 levels deep.
\\
\indent
In the \textsc{FINANCIAL} dataset all roles have domain and ranges explicitly specified. There are also some axioms defining inverse roles. In the r\textsc{SWRC} dataset domains and ranges are nearly not specified explicitly (except with one exception). There are, however, restrictions imposed on ranges of some roles while used with particular concepts in a role domain, for example when concept $AcademicStaff$ is used as a domain of the role $headOfGroup$ the range of the role can only be $ResearchGroup$: $AcademicStaff \sqsubseteq \forall headOfGroup$.$ResearchGroup$.
There are many axioms specifying inverse roles in r\textsc{SWRC}.
In r\textsc{LUBM} dataset domains and ranges of some roles are explicitly specified, there are also some inverse role specifications. 
\\
\indent
Concept definitions are only present in the r\textsc{LUBM} dataset. The example concept definition of concept $Student$ as \emph{a person taking some course} is: $Student \equiv Person \sqcap \exists takesCourse$.$Course$.
Hence, the atom $Student(key)$ as the refinement of query $Q(key)=Person(key), takesCourse(x_1), Course(x_1)$ is pruned after the s-freeness test.
\\
\indent
In the context of the query \emph{equivalency} test (when performed after the s-freeness test), the important features of the intensional part of the tested datasets are: specification of role properties, such as role inverse, and concept definitions. 
\\
\indent
Let us consider for example the following queries, $Q_1$ and $Q_2$, tested for semantic uniqueness w.r.t. the \textsc{FINANCIAL} dataset:
\\\\
$Q_1(key)=Client(key), isOwnerOf(key, x_1), hasPermanentOrder(x_1, x_2)$\\
$Q_2(key)=Client(key), hasOwner(x_1, key), hasPermanentOrder(x_1, x_2)$\\
\\
Since the role $hasOwner$ is an inverse of the role $isOwnerOf$, both patterns have exactly the same meaning, and one of them is semantically redundant.
\\
\indent
Concluding, from the analysis of the results and of the tested dataset features, it follows that the presence or lack of the disjointness constraints in a dataset, is a crucial feature. It is also desirable that disjointness constraints co-occur together with the specification of role domains and ranges. Disjointness constraints allow to prune unsatisfiable patterns before any query answering procedure execution, either on $(KB, P)$ or on a copy of $(KB, P)$.
\\
\indent
Moreover, the question posed in Section \ref{sec:ex_nosemresults}, why the r\textsc{SWRC} dataset is especially hard for our approach, has been indirectly answered by the analysis presented in this section. The r\textsc{SWRC} is not very expressive, as it does not contain disjointness constraints, nor explicit role domain and range specifications, any concept definitions, and its role hierarchy is flat. Hence, the gain that may be achieved by using intensional background knowledge for this dataset cannot be large. 
\subsubsection{Using taxonomies of concepts and roles in building pattern refinements}\label{sec:tax}
Table \ref{tab:results_tax} presents the results of the experiment on the use of concept and role taxonomies to build pattern refinements in line with hierarchical information from a $KB$ ($SEM+TAX$) in addition to the setting $SEM$. 
For all the datasets speedup has been achieved, growing with the increasing maximum length of patterns. 

\begin{table}[t]
\caption{Results of the experiment. \textsc{FINANCIAL}: $minsup$=0.2, $\hat{C}$=$Client$, r\textsc{SWRC}: $minsup$=0.3, $\hat{C}$=$Person$, r\textsc{LUBM}: $minsup$=0.3, $\hat{C}$=$Person$}\label{tab:results_tax}
\footnotesize
\begin{center}
\begin{tabular}{r | r | r | r | r  }
\hline
\multicolumn{1}{c|} {Dataset} & {Max} & \multicolumn{2}{c|}{runtime[s]} & \multicolumn{1}{c} {speedup}\\
& Length & \multicolumn{1}{r}{SEM} &	SEM+TAX &	SEM/SEM+TAX\\	
\hline
\textsc{FINANCIAL} & 12 &	30821.6 & 23146.4	& \textbf{1.33} \\
r\textsc{SWRC} & 4 &   2533.5 &	 1594.4 & \textbf{1.59}\\
r\textsc{LUBM} & 4 &   3486.7 &	2232.4 & \textbf{1.56}\\
\hline
\end{tabular}
\end{center}
\end{table}

\section{Related work}\label{sec:related}
We start the discussion in this section, from the features of knowledge representation languages admitted by the (onto-)relational frequent pattern mining approaches. Further we discuss how the related approaches exploit the semantics of their admitted representation languages. 
\\
\indent
The relational frequent pattern miners, \textsc{WARMR}, \textsc{FARMER}, and c-armr\index{c-armr}, are designed to operate on knowledge bases represented as logic programs (generally in a Datalog variant).  
Thus, by definition, they are not able to use the knowledge that has the form of description logic axioms that could not be rewritten into Datalog.  
Consider the knowledge base from Example \ref{ex:KB_statement}. In Datalog it is impossible to assert that all accounts must have an owner without explicitly specifying who the owner is. Let us show how this may affect the properties of patterns generated by a method. Consider the following pattern (query):
$$Q(x) =? - Account(x),Property(x)$$
The second atom of $Q$ may be considered as redundant, as from the knowledge base we already know that every account is a property, and we know also that if something has an owner, than it is a property. Moreover, it is stated that every account has an owner. For example, $account2$ is a property, even if there is nowhere written so, and the owner of $account2$ is nowhere specified. In Datalog such deduction, in order to catch the redundancy, is impossible.
\\
\indent
DL-safe rules formalism allows modelling the rules with disjunctions of atoms in their heads. 
Hence, despite of admitting an additional component of the knowledge base (in description logic), we also extend the language of logic programs used by other methods from Datalog to disjunctive Datalog. 
Summarizing, \textsc{WARMR}, \textsc{FARMER} and c-armr\index{c-armr} operate only on a part of the one of the two components assumed in our approach, namely only on the Datalog part of disjunctive Datalog.
\\
\indent
\textsc{SPADA} (in further versions named $\mathcal{AL}$-QuIn) has been the only approach so far to frequent pattern discovery in combined knowledge bases, more specifically the knowledge bases expressed in $\mathcal{AL}$-log \cite{AL_log}. $\mathcal{AL}$-log is the combination of Datalog with $\mathcal{ALC}$ description logic, and hence our approach supports more expressive language $\mathcal{SHIF}$ in the description logic component. The early version of \textsc{SPADA}/$\mathcal{AL}$-QuIn admitted only $\mathcal{ALC}$ atomic concepts as the structural knowledge (i.e., taxonomies) whereas roles and complex concepts have been disregarded. $\mathcal{AL}$-QuIn does not have this restriction. Rules in \textsc{SPADA}/$\mathcal{AL}$-QuIn are represented as constrained Datalog clauses. In these clauses, only DL concepts can be used (as constraints in the body). While in DL-safe rules using both concepts and roles in DL-atoms is allowed and DL-atoms can be used in rule heads as well. 
DL-safe rules are applicable only to explicitly named objects. The fact that atoms with concept predicates can occur only as constraints in the body of $\mathcal{AL}$-log rules has the similar effect. 
\\
\indent
The actual representation considered in the \textsc{SPADA}/$\mathcal{AL}$-QuIn is a special kind of Datalog (where description logic concepts serve as constraints in the Datalog clauses) \cite{Lisi_OWLED}. In the core of $\mathcal{AL}$-QuIn, description logic axioms are compiled into Datalog ones and appended to Datalog component (for the details see: \cite{Lisi_OWLED}). In turn, the DL-safe rules extend expressive description logics with disjunctive clauses. In the core of the reasoning mechanism proposed for DL-safe rules there is an inverse direction: description logic knowledge base is translated into a disjunctive Datalog program and rules are appended to the result of this translation. Note, that the first approach, consisting on computing consequences of the DL component first, and then applying the rules to these consequences is incorrect in general. 
Consider the following knowledge base $(KB, P)$:\\
\\
\ \ $Human(x) \leftarrow Man(x), \mathcal{O}(x)$ \\
\ \ $Human(x) \leftarrow Woman(x), \mathcal{O}(x)$ \\
\ \ $(Man \sqcup Woman)(Pat)$. \\
\\
The assertion of individual $Pat$ to concept $(Man \sqcup Woman)$ means that $Pat$ is a $Man$ or a $Woman$, but it is not known whether Pat is a man or a woman. Either of the rules from $(KB, P)$ derives that $Pat$ is a $Human$, hence $(KB, P) \models Human(Pat)$. It could not be derived by applying these rules to the consequences of $KB$, since $KB \not{\models} Man(Pat)$ and $KB \not{\models} Woman(Pat)$.  
Thus, in order to perform the translation from the description logic to Datalog, $\mathcal{AL}$-QuIn has to assure that \emph{all} the concepts are named, which is not a restriction in our approach. 
\\
\indent
Eventually, the language used in \textsc{SPADA}/$\mathcal{AL}$-QuIn corresponds to Datalog, while that used in our approach to the, more expressive, disjunctive Datalog. 
\\
\indent
The relational data mining methods, \textsc{WARMR} and \textsc{FARMER}, use  $\theta$-\emph{subsumption} as the generality measure. The $\theta$-subsumption is a syntactic generality relation and as such it is not strong enough to capture semantic redundancies. For the knowledge base from Example \ref{ex:KB_statement}, \textsc{WARMR} may discover queries like the following one:
$$Q(x, y, z)=?-p\_familyAccount(x, y, z), p\_sharedAccount(x, y, z)$$
In the knowledge base, $p\_familyAccount$ is defined as a type of $p\_sharedAccount$. This makes the second atom of $Q$, $p\_sharedAccount(x, y, z)$, semantically redundant. 
The $\theta$-subsumption is to weak to use such taxonomic information.
Using the syntactic generality measure causes redundancy not only in a single pattern, but also in a set of patterns. 
Consider, for example, the following queries that would be both discovered by \textsc{WARMR}:
\begin{center}
$Q_1(x, y, z) = ?-p\_familyAccount(x, y, z)$\\
$Q_2(x, y, z) = ?-p\_familyAccount(x, y, z), p\_sharedAccount(x, y, z)$
\end{center}
Under a semantic generality measure they would be equivalent to each other.
\\
\indent
Both c-armr and \textsc{SPADA} have been conceived to use a semantic generality measure, but \textsc{SPADA} does not fully exploits it in an algorithm for pattern mining to avoid generation of semantically redundant patterns. 
It is not used either to prune patterns semantically redundant, due to redundant literals nor to prune semantically equivalent patterns. That is, similar pattern as described above w.r.t. \textsc{WARMR}, would be generated by \textsc{SPADA}:
$$q(y)\leftarrow p\_woman(y), p\_familyAccount(x, y, z), p\_sharedAccount(x, y, z) \& Client(y)$$
Also, there is no solution in \textsc{SPADA} algorithm to check the redundancy using the knowledge linking Datalog and description logic component (like the rule defining $p\_familyAccount$). The following clause may be generated:   
$$q(x)\leftarrow p\_familyAccount(x, y, z), p\_woman(y) \ \&\ Account(x), Client(y)$$
while from the $(KB, P)$ already follows the constraint $Client$ on variable $y$. Since the second argument of $p\_familyAccount$ describes an owner, and being an owner implies being a client, the atom $Client(y)$ is redundant w.r.t. $(KB, P)$.   
\\
\indent
It is also interesting to note here, that 
in c-armr not only semantically free (s-free), but also semantically closed (s-closed) patterns are generated. Generation of semantically closed patterns is based on the assumption that each s-free clause has a unique s-closed clause (s-closure) and several s-free clauses may have the same s-closure. This is valid for Datalog, as s-closures are computed based on the computation of the least Herbrand models. In the context of our approach, this assumption is not valid anymore. A combined knowledge base $(KB, P)$ is translated into a disjunctive Datalog program, where there may be possibly many minimal models. Additionally, if there are transitive roles defined in a knowledge base, in order to be s-closed, a query should contain the transitive closure of atoms with such role. It may cause that a query contains many, possibly not interesting atoms and leads to additional computations. Taking into account the abovementioned issues, we decided to generate only s-free queries. 
\\
\indent
Relational frequent pattern mining algorithms usually generate patterns according to a specification in a \emph{declarative bias}. Declarative bias allows to specify the set of atom templates describing the atoms to be used in patterns. Common solution is to take one atom template after the other to build the refinements of a pattern, in order in which the templates are stored in a declarative bias directives. Such solution is adopted in \textsc{WARMR}, \textsc{FARMER}, c-armr. However, the solution does not make use of a semantic relationships between predicates in atoms, causing redundant computations. Consider the patterns:
\\
\\
\ \ \ \ $Q_1(y) = ?-p\_woman(y), p\_sharedAccount(x, y, z)$\\
\ \ \ \ $Q_2(y) = ?-p\_woman(y), p\_familyAccount(x, y, z)$
\\
\\
Assume that pattern $Q_1$ has been found infrequent. Thus, generating pattern $Q_2$ is useless, since $p\_familyAccount$ is more specific than $p\_sharedAccount$. However, c-armr would generate and test both queries anyway. 
If some taxonomic information was used to systematically generate refinements, the redundant computation could be avoided. 
In \textsc{SPADA}, taxonomic information is used only with regard to the concept hierarchies. That is, patterns are refined by replacing more general concepts by more specific ones in the constraints of the constrained Datalog clause. Any technique using taxonomic information is not reported with regard to the Datalog predicates. It means, the same scheme of refining patterns, described above, is applied in \textsc{SPADA}/$\mathcal{AL}$-QuIn, too. 
\\
\indent
Finally, in Table \ref{tab:features_comparison}, we provide the comparison of the semantic features of the approaches to (onto-)relational frequent pattern mining. In the last row we provide the features of our approach, \textsc{SEMINTEC}.   
\\
\indent
\begin{table}[t]
\caption{Semantic features of (onto-)relational frequent pattern mining methods.}\label{tab:features_comparison}
\scriptsize
\begin{center}
\begin{tabular*}{9.7cm}{|l | l |l | l | l | l | l| l |}
\hline 
& \multicolumn{3}{c|}{Knowledge representation} & \multicolumn{4}{c|}{Method} \\
\hline
  & {\begin{sideways}DL component\end{sideways}}   
 & \rotatebox{90}{\parbox[c]{2cm}{Datalog \ \ \ \ \ \ component}}
 & \rotatebox{90}{\parbox{2cm}{disjunctive rules}}
 & \rotatebox{90}{\parbox{2cm}{semantic \ \ \ \ \ \ \ \  generality     \ \ \ measure}} 
 & \rotatebox{90}{\parbox{2.1cm}{semantically free \ \ \ (non-redundant) patterns}}
 & \rotatebox{90}{\parbox{2cm}{only one pattern from each equivalence class}}
 & \rotatebox{90}{\parbox{2.1cm}{taxonomies\ \ \ \ \ \ \ \ \  directly used in refining patterns}}\\
\hline
\textsc{WARMR} & & x & & & & &\\
\textsc{FARMER} & & x & & & & &\\
c-armr & & x & & x & x & x & \\
\textsc{SPADA}/$\mathcal{AL}$-QuIn\ \ & x & x &  &x & &  & x \\
\textsc{SEMINTEC} & x & x & x & x & x & x & x\\
\hline
\end{tabular*}
\end{center}
\end{table}
With regard to the languages used, all of the approaches are able to operate on a relational component. Only \textsc{SPADA}/$\mathcal{AL}$-QuIn and our proposed  approach, \textsc{SEMINTEC}, are the systems designed to take a description logic component into account. Moreover, \textsc{SPADA}/$\mathcal{AL}$-QuIn is only able to use concepts in patterns, but it is not able to use any roles. Additionally, only the representation used in \mbox{\textsc{SEMINTEC}} allows disjunctions of atoms in rule heads. With regard to the features of the algorithms, only c-armr, \textsc{SPADA}/$\mathcal{AL}$-QuIn and \textsc{SEMINTEC} use the semantic generality measure, with the consequences described earlier in this section. Only \mbox{c-armr} and \mbox{\textsc{SEMINTEC}} apply a technique to prune semantically redundant literals from patterns, by checking s-freeness property. Summarizing, \textsc{SEMINTEC} is the only approach having all the features presented in Table \ref{tab:features_comparison}.

\section{Conclusions and future work}\label{sec:conc}
In this paper we have proposed a new method for frequent pattern discovery from knowledge bases represented in a combination of $\mathcal{SHIF}$ description logic and DL-safe rules.
We have focused on the relation of the semantics of the representation formalism to the task of frequent pattern discovery, as this is a key aspect to the design of (onto-)relational frequent pattern discovery methods.
For the core of our method we have proposed an algorithm that applies techniques that exploit the semantics of the combined knowledge base.
\\
\indent
We have developed a proof-of-concept implementation of this method using the state-of-the-art reasoning techniques. 
We have empirically shown that using the intensional part of the combined knowledge to perform semantic tests on candidate patterns can make data mining faster. This is because the semantic tests help to prune useless patterns before their evaluation, and they help to avoid the futile search of large parts of the pattern space. We have also shown that exploiting the semantics of a knowledge base can improve the quality of the set of patterns produced: the patterns are more compact through the removal of redundant atoms, and more importantly, there are fewer patterns, as only one pattern is produced from each semantic equivalence class.
\\
\indent
The primary motivation for our work is the real-world need of the Semantic Web for data-mining methods.  For example large amounts of biological data are now being represented using descriptions logics and rules, and there is a scientific need to find frequent patterns in this data. Our method is a baseline for future work in this area that may be twofold. Firstly, after careful investigation of a particular needs of prominent application domains, the scope of the method may be extended by considering more expressive languages falling into the framework of DL-safe rules. Secondly, we plan to develop optimization techniques and heuristic algorithms for which the proposed method (complete w.r.t. to the pattern space search) would be a point of reference.
\\\\
\small
\textbf{Acknowledgements.} The authors acknowledge the support of the Polish Ministry of Science and Higher Education (grant number N N516 186437). 
We are grateful to Boris Motik for explanations on DL-safe rules formalism, and for providing KAON2 reasoner, to Prof. Ross D. King for valuable discussions on KR and data mining in biology, and the remarks on earlier versions of the paper, to Jan Ramon for explanations on the concept of the semantic freeness of Datalog patterns. We are also very grateful to the anonymous reviewers for all their comments.
\normalsize

\bibliography{tplp_orai2009}

\end{document}